\providecommand{\algorithmname}{Algorithm}
\theoremstyle{plain}
\newtheorem{lem}{\protect\lemmaname}
\theoremstyle{plain}
\newtheorem{cor}{\protect\corollaryname}
\theoremstyle{plain}
\newtheorem{prop}{\protect\propositionname}
\theoremstyle{plain}
\newtheorem{thm}{\protect\theoremname}
\newcommand{\newac}{\newacronym}
\newcommand{\ac}{\gls}
\newcommand{\Ac}{\Gls}
\newcommand{\acpl}{\glspl}
\providecommand{\corollaryname}{Corollary}
\providecommand{\lemmaname}{Lemma}
\providecommand{\propositionname}{Proposition}
\providecommand{\theoremname}{Theorem}
\title{Coordinated Position Falsification Attacks and Countermeasures for Location-Based Services}
\author{
Wenjie~Liu,~\IEEEmembership{Graduate~Student~Member,~IEEE,}
and~Panos~Papadimitratos,~\IEEEmembership{Fellow,~IEEE}
\thanks{W. Liu and P. Papadimitratos are with the Networked Systems Security Group, KTH Royal Institute of Technology, 114 28 Stockholm, Sweden.}
\thanks{Corresponding author: Wenjie Liu (e-mail: \textit{wenjieli@kth.se}).}
\thanks{This work was supported in part by the KAW Foundation and the China Scholarship Council. We thank the Jammertest 2024 organizers for a live \ac{gnss} jamming/spoofing/meaconing test environment, the former NSS member Dr. Marco Spanghero for support with simulations, smartphone antenna, and circuit components, and the National Academic Infrastructure for Supercomputing in Sweden (NAISS) for computational resources.}
}
\begin{document}

\maketitle              

\begin{abstract}
With the rise of \ac{lbs} applications that rely on terrestrial and satellite infrastructures (e.g., \ac{gnss} and crowd-sourced Wi-Fi, Bluetooth, cellular, and IP databases) for positioning, ensuring their integrity and security is paramount. However, we demonstrate that these applications are susceptible to low-cost attacks (less than \$50), including Wi-Fi spoofing combined with \ac{gnss} jamming, as well as more sophisticated coordinated location spoofing. These attacks manipulate position data to control or undermine \ac{lbs} functionality, leading to user scams or service manipulation. Therefore, we propose a countermeasure to detect and thwart such attacks by utilizing readily available, redundant positioning information from off-the-shelf platforms. Our method extends the \ac{raim} framework by incorporating opportunistic information, including data from onboard sensors and terrestrial infrastructure signals, and, naturally, \ac{gnss}. We theoretically show that the fusion of heterogeneous signals improves resilience against sophisticated adversaries on multiple fronts. Experimental evaluations show the effectiveness of the proposed scheme in improving detection accuracy by 62\% at most compared to baseline schemes and restoring accurate positioning. 
\end{abstract}

\begin{IEEEkeywords}
Localization attacks, secure localization, Geolocation APIs
\end{IEEEkeywords}

\glsresetall

\section{Introduction}
\Ac{lbs} applications, integrated into daily life, depend on positioning derived from terrestrial and satellite infrastructures, such as cellular networks (3/4/5G), Wi-Fi, Bluetooth, and \ac{gnss} (e.g., \ac{gps}). For example, people use Google Maps for navigation or Uber for ride-hailing and food delivery services to their home or office. Netflix relies on \ac{lbs} to implement region locks. Accurate positioning is essential for the proper functionality and quality of these services. 

Recent studies and reports \cite{PaaKjeIntTho:C18,Wan:J16,Fan:J25,WanWanWanNik:J18,EryPap:C22,YilCukEmi:J23} have revealed real-world vulnerabilities in \ac{lbs} applications, which have even led to large-scale scams. Several attack instances involve generating false signals to manipulate positioning results used by \ac{lbs}. For example, the ``ghost driver'' is automatically assigned to ride-sharing passengers (e.g., using Uber) \cite{Wan:J16,Fan:J25}. This position manipulation allows the adversary to take subtle detours or simulate a pickup and trip completion, thereby carrying out a taxi fare scam. Game players of Pok{\'e}mon GO or users of Google Maps, Waze, Foursquare, etc., can fake the current position of the smartphone to gain non-compliant rewards \cite{PaaKjeIntTho:C18,EryPap:C22}. Scooter sharing services \cite{YilCukEmi:J23}, a popular urban transportation method, are often managed using geofencing, an \ac{lbs} feature that virtually restricts the riding area or dictates billing zones. However, position manipulation can break the geofencing, causing scooters to seemingly appear within the designated area while they are actually far outside. Public transport systems might use e-ticketing \cite{MicJatZibRaz:C24} and charge passengers based on travel distance; however, attackers can travel for free by crafting fake positions. 

A straightforward and increasingly serious threat to \ac{lbs} security is \ac{gnss} spoofing \cite{PsiHumSta:J16,SheWonCheChe:C20,AltMukKam:J23,YanEstVor:C23}. Attackers broadcast counterfeit satellite signals, often with higher strength but in the correct format. This fools receivers into locking onto the adversarial signals instead of the legitimate ones, thereby compromising position and time. Once considered exotic, \ac{gnss} spoofing attacks are now feasible using replays \cite{LenSpaPap:C22} or even open-source simulators. Wi-Fi geolocation, widely used in urban and indoor environments to complement \ac{gnss}, is also vulnerable to manipulation: Rogue Wi-Fi \acpl{ap} pose multiple security threats \cite{TipRasPopCap:C09,VanPie:C14,AloEll:J16,Skylift2016,HanXioSheLu:C22,HanXioSheWei:J24}. In our context, attackers broadcast Wi-Fi beacons (potentially downloaded or collected elsewhere) using consumer-grade Wi-Fi routers or low-cost Wi-Fi chips (e.g., ESP8266 \cite{Skylift2016}) to manipulate Wi-Fi-based positioning. Additionally, cellular-based positioning, which relies on signals from \acpl{bs}, is also a potential weak point of \ac{lbs}. Attackers can replay signals or deploy rogue \acpl{bs} to mislead positioning \cite{ShaBorParSei:C18}. For IP geolocation methods (GeoIP) such as using \ac{rtt} of \ac{icmp} ping, attackers can relay, use transparent proxies, or \ac{vpn} to control positioning results \cite{AbdMatVan:C17,KohDia:C22}. Android mock location API, designed for developers testing \ac{lbs}, provides a legal yet easy way to simulate false positions. 

Existing detections for \ac{gnss} spoofing, rogue Wi-Fi \acpl{ap}, and more broadly secure positioning usually rely upon specialized hardware or complex algorithms, which restricts their practicality. For instance, a multi-antenna array is needed to calculate the \ac{aoa} for \ac{gnss} spoofing detection \cite{SchRadCamFoo:J16,TanXieHuaLi:J25}, and Wi-Fi \ac{csi} fingerprints require certain network interfaces to measure \cite{LinGaoLiDon:C20,YanYanYanSon:J22}. However, most smartphones cannot even monitor the carrier phase of \ac{gnss} signals \cite{Gps:J24}. Furthermore, some recent proposals thwart attacks based on the assumption that some other infrastructure is out of reach of the adversary, e.g., \cite{LiuPap:J25a} detects \ac{gnss} spoofing while presuming that no Wi-Fi signal is adversarial. Hence, their detections fail in the presence of cellular jamming and rogue Wi-Fi \acpl{ap}. 

To highlight a rather challenging situation, we consider coordinated attacks on \ac{gnss} and other wireless signals. Specifically, we implement two new attacks. Due to the lack of encryption and authentication in Wi-Fi beacons, attackers can exploit this vulnerability by combining jamming \ac{gnss} signals with Wi-Fi spoofing. This forces the \ac{lbs} to rely solely on fake Wi-Fi beacons broadcast by the attacker for localization. Another attack involves coordinating the manipulation of position estimates derived from multiple infrastructures simultaneously. 

In response to these challenges, our solution leverages multiple opportunistic sources of ranging and motion information for attack detection and mitigation. We propose an enhanced \ac{raim} framework. Unlike previous work that often assumes a trusted signal source, we assume that any wireless signal contributing to the device's position estimation can potentially be attacked to manipulate the position. Since removing all available benign signals is considerably difficult for an attacker, it is reasonable to assume that one or more benign subsets often remain available. By cross-validating opportunistic ranging information from \ac{gnss} and terrestrial network infrastructures, our approach is compatible and can complement hardware fingerprinting or signal processing level detection \cite{Del:J24,FanYueXuHsu:J23,LinGaoLiDon:C20}, and independently improves the detection and mitigation of \ac{lbs} position manipulation scams. The scheme works mainly in two stages. The first stage generates subsets of the opportunistic ranging information. From these subsets, temporary position estimates and their associated uncertainties are calculated. Strategies employed, including subset sampling and \ac{llm}-based \ac{ap} name matching, improve both the efficiency and accuracy of our algorithms. Then, the second stage performs position fusion to detect \ac{gnss} and Wi-Fi spoofing using temporary estimates and cross-validates them. The algorithm identifies inconsistencies in temporary estimates indicative of spoofing attacks. 

In our earlier work \cite{LiuPap:C24,LiuPap:C24b}, we extended \ac{raim} to detect \ac{gnss} spoofing or rogue Wi-Fi \ac{ap} attacks and evaluated the detection in two simplified scenarios. A follow-up study \cite{LiuPap:C25b} proposed two new attacks on real-world \ac{lbs} with detections. However, due to legal restrictions on outdoor \ac{gnss} spoofing, these evaluations largely relied on simulations. Furthermore, few existing efforts have developed a testbed capable of supporting multimodal evaluation. In this paper, we address this gap by designing a secure and isolated over-the-air testbed that supports outdoor testing of multiple attacks on wireless signals. We also present a more comprehensive analysis of the detection theory, computational complexity, and experiments. 

The novelty and contribution are summarized as follows. We implement coordinated attacks at low cost on the most popular location providers and illustrate real-world \ac{lbs} position manipulations\footnote{\url{https://doi.org/10.5281/zenodo.15437800}}. Then, we develop a \ac{raim}-based framework for detecting and mitigating threats from these coordinated position attacks. By integrating opportunistic information from multiple sources, our approach provides a likelihood function against \ac{gnss} spoofing, rogue Wi-Fi \acpl{ap}, etc. In particular, in this paper, unlike earlier work, we propose a position recovery method and show its effectiveness even when the device is under attack. We prove its practicality without assuming any trusted source. We demonstrate the effectiveness of our approach through real-world evaluations, showing improved accuracy and reliability in detecting attacks in different scenarios. 

The remainder of this paper is organized as follows. Section~\ref{sec:backgr} provides background knowledge of \ac{lbs}, \ac{gnss} spoofing, and relevant attacks. Section~\ref{sec:sysmod} presents our system model and adversary. Section~\ref{sec:attlbs} shows how to launch attacks on \ac{lbs}. Section~\ref{sec:prosch} proposes our countermeasure. Section~\ref{sec:experi} evaluates the proposed scheme with baseline methods, and Section~\ref{sec:relwor} reviews related work about detection. Finally, Section~\ref{sec:conclu} concludes the paper. 

\section{Background}
\label{sec:backgr}
\subsection{Location-Based Service}
\Ac{lbs} has reshaped numerous aspects of our daily lives, offering personalized and context-conscious experiences. We discuss in detail the wide variety of applications of \ac{lbs} in different domains: Indoor navigation systems provide users with seamless navigation from outdoor to indoor environments, as \ac{gnss} signals may be limited in many indoor environments. They mainly use technologies such as Wi-Fi fingerprinting, Bluetooth beacons, and \ac{uwb} ranging \cite{FurCriBarBar:J21}. In addition, geofencing of \ac{lbs} allows merchants to define virtual boundaries and trigger actions such as location-based messaging, geo-notifications, and different pricing strategies when users enter or leave predefined areas \cite{RodDev:C14}. Additionally, \ac{lbs} plays a vital role in security and emergency response systems, facilitating rapid assistance and enhancing public safety. For example, \cite{HerArtPerOro:J19} proposed a mobile-based intelligent emergency response system to utilize location data to coordinate with emergency services. As for \ac{lbs} recommendation, \cite{SheWanZhuLiu:J22} leverages a consumer's location and behavior through neural networks to provide customized recommendations for various services. Further, \ac{lbs} has revolutionized the food delivery market by allowing taxis to deliver food opportunistically \cite{LiuGuoCheDu:J18}. Optimize the delivery routes and the reward for each participating taxi to improve efficiency and time. On another front, \ac{lbs} social networking platforms permit users to connect with others based on their geographical proximity \cite{WeiQiaSunSun:J22} and \ac{lbs} games integrate gaming elements with real-world positions, offering immersive play experiences for gamers \cite{CheLuLuo:J18,PaaKjeIntTho:C18}. Enabling targeted advertising and promoting, location-based marketing strategies have redefined advertising: \cite{LeWan:J20} examine the effectiveness of location-based mobile advertising in influencing consumer behavior. They highlight the role of contextual offers and social facilitation in enhancing engagement and conversion rates. 

\subsection{GNSS Spoofing Attack}
Spoofing \ac{gnss} involves transmitting fake yet correctly formatted \ac{gnss} signals \cite{HumLedPsiOha:C08,SatStrLenRan:C22,SpaPap:C23}. This deception can cause subtle deviations in timing, signal strength, and arrival angle that are difficult for modern receivers to distinguish. Beyond efforts to authenticate \ac{gnss} signals and messages \cite{FerRijSecSim:J16,Gmv:J21b} to mitigate spoofing attacks, adversaries can still use the recording and replay of authentic \ac{gnss} signals \cite{MaiFraBluEis:C18,LenSpaPap:C22}. Authentication is also not yet supported by the vast majority of \ac{gnss} receivers and requires additional computational resources. A more potent replay/relay attack \cite{ZhaLarPap:J22} similarly employs distance-decreasing attacks to tamper with signals, creating the false impression of earlier arrivals in contrast to their genuine timing. More recent work, such as \cite{SheWonCheChe:C20,GaoLi:J22}, has focused on slowly varying strategies to avoid detection by tightly coupled \ac{gnss}/\ac{imu}. Moreover, \cite{GaoLi:J23} have investigated two time-based gradual spoofing algorithms targeting \ac{gnss} clock information within the navigation message. An overarching subject is the adaptability of attackers who take advantage of versatile \acpl{sdr}. The \acpl{sdr} not only facilitate \ac{gnss} spoofing but additionally pose threats to other alternative navigation infrastructures, including \ac{sbas} and assisted \ac{gps} \cite{ShiDavSag:J20,OhaRusHegAnd:C22}. This complex and diverse threat underscores the urgent need for comprehensive countermeasures to safeguard \ac{gnss} and other \ac{sop} navigation systems.

\subsection{Wireless Network Attack}
Wireless networks provide alternative positioning for \ac{lbs}, yet their broadcast nature makes them susceptible to multiple attacks. Bluetooth replay \cite{VonPeuFraGro:C21} allows an attacker to infiltrate the secured connection from a
victim to another remote one to fake nearby Bluetooth devices. Rogue cellular towers \cite{ShaBorAsoNie:C16} can downgrade \ac{lte} users to less secure and easier to control networks for false positioning services. Among the most potent threats are those targeting Wi-Fi using rogue \acpl{ap} (unauthorized \acpl{ap} installed within a network area) \cite{AloEll:J16}. They can serve as vectors for numerous malicious activities, notably facilitating man-in-the-middle attacks \cite{ThaRifGar:J22}. By intercepting and altering wireless communication from clients, they compromise the integrity and accuracy of channels and localizations. Furthermore, rogue \acpl{ap} can mimic legitimate \acpl{ap} to deceive Wi-Fi clients into automatically connecting to these fraudulent networks, potentially leaking sensitive login credentials or other privacy information \cite{LouPerSta:J23}. Various rogue Wi-Fi attacks (e.g., unfair channel usage, continuous jamming, selective jamming, and channel-based man-in-the-middle) are investigated and implemented using commodity hardware in \cite{VanPie:C14}. Then, the alteration of \acpl{rssi} and related positioning statistics introduces inaccuracies and inconsistencies into \ac{lbs}, undermining their reliability and functionality \cite{FluPotPapHub:C10,YuaHuLiZha:C18}. An open-source implementation for broadcasting Wi-Fi beacons allows arbitrarily manipulation of smartphone Wi-Fi positioning \cite{Skylift2016}. Together, these studies underscore the vital importance of implementing robust countermeasures to safeguard wireless networks and ensure the integrity of \ac{lbs}.

\begin{figure}
\begin{centering}
\includegraphics[trim={0 0 2cm 0},clip,width=\columnwidth]{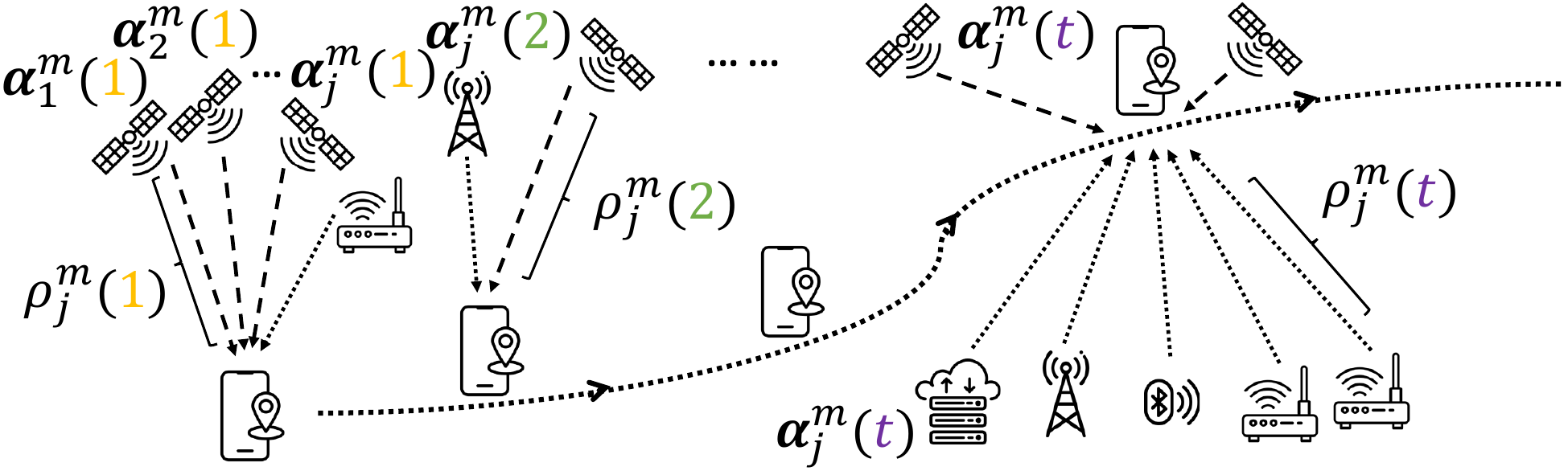}
\par\end{centering}
\caption{\Ac{lbs} applications have ranging information with anchor positions from \ac{gnss} and network infrastructures.}
\label{fig:sysmod}
\end{figure}

\section{System and Adversary Model}
\label{sec:sysmod}
\subsection{System Model}
As shown in Figure~\ref{fig:sysmod}, we consider various \ac{lbs} on a mobile platform that can use both \ac{gnss} and other opportunistic signals for positioning. At time $t$, the platform is located at unknown $\mathbf{p}_{\text{usr}}(t) \in \mathbb{R}^3$. Examples include smartphones, vehicles, and aerial platforms (e.g., drones), which are equipped with modules that leverage opportunistic information to calculate position $\mathbf{p}_{\text{lbs}}(t)$ (an estimate of $\mathbf{p}_{\text{usr}}(t)$) and provide it to the \ac{lbs} application. Opportunistic information includes wireless signals received from network interfaces (e.g., Wi-Fi and cellular networks), Bluetooth, GeoIP, as well as motion data from onboard sensors (e.g., \acpl{imu} and wheel speed sensors).

We define motion measurements as speed $\mathbf{v}(t)$, acceleration $\mathbf{a}(t)$, and orientation $\boldsymbol{\omega}(t)$. Ranging information $\rho_j^m(t)$ is obtained from anchors with known positions $\boldsymbol{\alpha}_j^m(t)$; these anchors include \ac{gnss} satellites, cellular \acpl{bs}, Wi-Fi \acpl{ap}, Bluetooth devices, and GeoIP \ac{rtt} servers. $j \in \mathcal{J}^m(t)$, where $\mathcal{J}^m(t)$ is the set of anchors at time $t$, $m=1,2,...,M$, and $M$ is the number of opportunistic information types. For example, on outdoor \ac{gnss}-enabled phones, ranging information is obtained from \ac{gnss} satellites and terrestrial network infrastructures, including pseudoranges and \acpl{rssi}. Meanwhile, for Wi-Fi devices operating indoors, ranging information is based on \acpl{rssi} collected from Wi-Fi beacons of nearby \acpl{ap}.

However, conditions exist where \ac{gnss} satellites provide incorrect pseudoranges, or other sources (e.g., Wi-Fi) suffer from signal interference. Under benign conditions, position estimations derived from different types of opportunistic information should remain consistent. Hence, if position manipulation occurs, large deviations typically arise between the estimated and actual positions. 

We do not make specific assumptions on the application domain, but it is implied that opportunistic infrastructure is present, possibly in reasonably high numbers and sufficient density, e.g., a deployment of several \acpl{ap} and/or \acpl{bs} within range. This is typical for an urban environment, and our scheme can be used to safeguard commercial \ac{lbs} or civilian protection responders. Deployment in the absence of \acpl{ap} and \acpl{bs} would require other opportunistic positioning sources, e.g., Low Earth Orbit satellites, which are beyond the scope of this work. 

\begin{figure}
\begin{centering}
\includegraphics[trim={0 0 2cm 0},clip,width=\columnwidth]{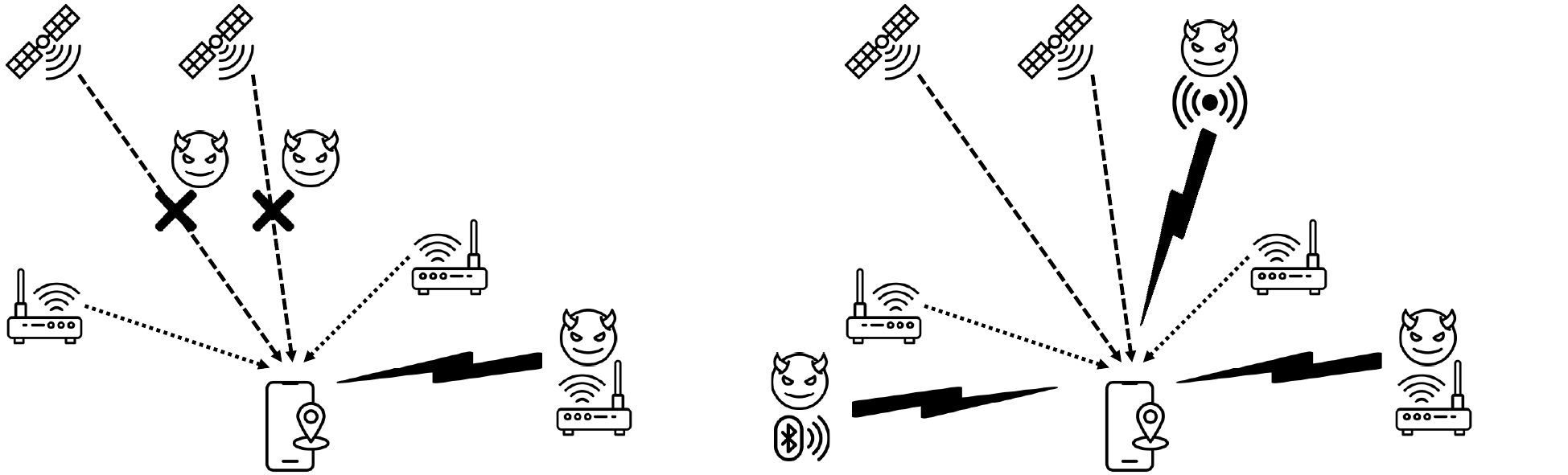}
\par\end{centering}
\caption{Left: Wi-Fi spoofing with \ac{gnss} jamming; right: coordinated location spoofing.}
\label{fig:advers}
\end{figure}

\subsection{Adversary Model}
The adversary can jam, spoof, and replay wireless signals (e.g., \ac{gnss} pseudorandom noise codes, Wi-Fi beacons, cellular signals, Bluetooth beacons), and messages (e.g., \ac{rtt}-related packets) to modify ranging information of the mobile platform. To achieve the modifications, attackers can employ tools such as jammers, spoofers, \ac{gnss} simulators, privacy protection devices \cite{EUSPA:R21}, \acpl{sdr} \cite{LenSpaPap:C22}, or rogue \acpl{ap} \cite{VanPie:C14}. They can take over unsuspecting devices nearby, but cannot alter the algorithm of \ac{lbs} deriving position. The adversary can manipulate the positioning information input and disrupt the integrity and security of \ac{lbs}. Additionally, we assume that the adversary cannot remove all benign ranging information from wireless signals. According to Figure~\ref{fig:advers}, the following attack types are taken into account: Wi-Fi Spoofing with \ac{gnss} Jamming and Coordinated Location Spoofing, detailed in Section~\ref{sec:attlbs}.

\section{Attack Description}
\label{sec:attlbs}
\Ac{lbs} applications play an increasingly crucial role on various mobile platforms. These platforms interact with the external environment via wireless signals. Therefore, this section describes a series of successfully executed \ac{lbs} position manipulation attacks achieved by controlling wireless signals. 

\subsection{Multi-Band GNSS Spoofing}
We begin by presenting a multi-constellation multi-frequency \ac{gnss} spoofing attack, a common form of position manipulation, targeted at Google Maps, which can be generalized to similar map applications or services. 

Method: The attacker first uses a \ac{gnss} jammer to disrupt the victim's reception of legitimate satellite signals, forcing the device to lose its \ac{gnss} lock. Then, the \ac{gnss} simulator generates and broadcasts multi-band \ac{gnss} live-sky signals at a spoofed position chosen by the attacker. The spoofing signals have higher power than the real ones, leading the victim device to compute an attacker-controlled position. 

Impact: Although the platform estimates the position based on a fusion of \ac{gnss}, Wi-Fi, cellular, and Bluetooth, the positioning result will be dominated by \ac{gnss} as long as the \ac{gnss}-provided (possibly attacked) position has a good \ac{dop}. Even if the \ac{gnss} position is not consistent with the network-based positioning result, \ac{lbs} outputs a fused position that is near the spoofed \ac{gnss} position. Moreover, if the victim has enabled the sharing of location data, the spoofed \ac{gnss} position may be incorporated by the location provider to train its network-based positioning database, thus polluting its geolocation system. In addition, the attacker can relay the \ac{gnss} signals from a single location to multiple areas, misleading Google Maps into marking a specific \ac{poi} as congested \cite{EryPap:C22}. 

\begin{figure}
    \centering
    \includegraphics[trim={0.5cm 5.7cm 0.3cm 6cm},clip,width=.49\columnwidth]{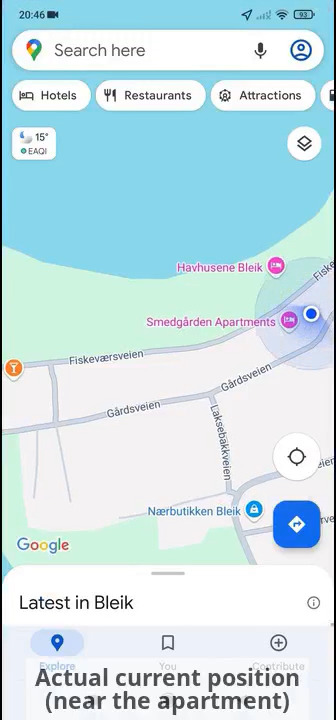}
    \includegraphics[trim={0.5cm 5.7cm 0.3cm 6cm},clip,width=.49\columnwidth]{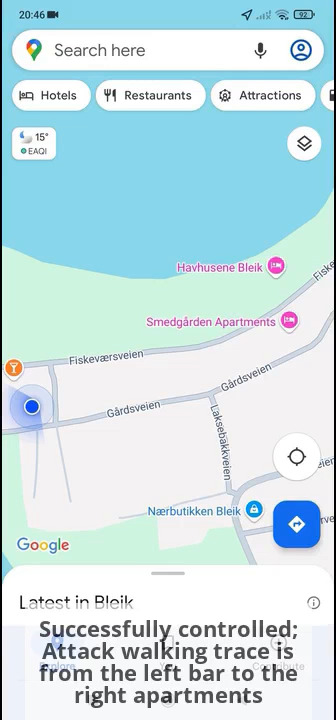}
    \caption{The blue circle in the left screenshot indicates the actual position. The position in the right screenshot is spoofed using a Wi-Fi router with \ac{gnss} jamming. Screenshots are taken from the recorded attack demonstration video.}
    \label{fig:degatt}
\end{figure}

\subsection{Wi-Fi Spoofing with GNSS Jamming}
\label{subsec:attlbsdeg}
Considering high-quality \ac{gnss} simulators cost thousands of dollars, the price of an \ac{sdr} is at least hundreds of dollars, and these instruments are not portable to moving targets (e.g., Uber or DiDi on smartphones), while increasingly practical, the aforementioned \ac{gnss} spoofing attack is not commonly observed in everyday life. However, we then implemented a \emph{Wi-Fi spoofing mixed with \ac{gnss} jamming} at an execution cost under tens of dollars, which simultaneously uses a portable Wi-Fi router and \ac{gnss} jammer to control the position. 

Method: Instead of jamming \ac{gnss} at the beginning, we continuously jam the \ac{gnss} signals throughout the entire duration of the attack. Then, we use a commercial Wi-Fi router (Linksys WRT1200AC) to replay prerecorded Wi-Fi beacons from an intended spoofing trajectory. The Python script of this location spoofing is available below, which should be compatible with Wi-Fi routers installed OpenWrt system and enabled virtual \acpl{ap}. Lastly, the fusion algorithms of \ac{lbs} applications, upon losing \ac{gnss} signals, default to Wi-Fi-based positioning, thereby accepting the spoofed positions. 

Impact: These applications often do not inform the user that \ac{gnss} is not available or when positioning is solely based on network infrastructure. As a result, malicious taxi drivers may manipulate their routes for illegal profits or bypass trip security checks to take the passenger to another destination without triggering any alerts. A demonstration is shown in Figure~\ref{fig:degatt}.
\begin{lstlisting}[language=python, basicstyle=\footnotesize, breaklines=true, caption=Wi-Fi location spoofing script tested on a consumer-grade router installed OpenWrt system and enabled virtual \acpl{ap}. It will broadcast Wi-Fi beacons of \texttt{ap\_set}.]
def uci_set_wifi(ap_set):
    # ap_set is either prerecorded or obtained from WiGLE.net for the targeted spoofing position
    ap_set = ap_set.replace(np.nan, '', regex=True)
    ssh.exec_command("sed -i '37,$d' /etc/config/wireless")
    for i in range(len(ap_set)):
        cmd_to_execute = "uci batch << EOF\n"
        cmd_to_execute += f"""
            set wireless.wifinet{i}=wifi-iface
            set wireless.wifinet{i}.device='radio1'
            set wireless.wifinet{i}.mode='ap'
            set wireless.wifinet{i}.ssid='{ap_set.loc[i, "SSID"]}'
            set wireless.wifinet{i}.encryption='psk2'
            set wireless.wifinet{i}.macaddr='{ap_set.loc[i, "BSSID"]}'
            set wireless.wifinet{i}.key='Password'
            """
        cmd_to_execute += "commit\nEOF"
        ssh.exec_command(cmd_to_execute)
    ssh.exec_command("wifi reload")
\end{lstlisting}

\begin{figure}
\begin{centering}
\includegraphics[trim={0 0.5cm 0 0.45cm},clip,width=\columnwidth]{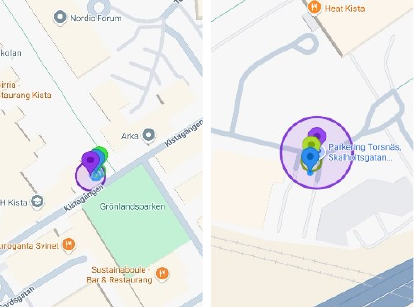}
\par\end{centering}
\caption{Coordinated attack manipulates the network-based (purple circle with pin), \ac{gnss} (green circle with pin), and fused (blue circle with pin) positions of application, deviating from KTH Kista (left) to Skalholtsgatan (right).}
\label{fig:coratt}
\end{figure}

\subsection{Coordinated Location Spoofing}
\label{subsec:attlbscor}
While \ac{gnss} spoofing and \emph{Wi-Fi spoofing with \ac{gnss} jamming} can manipulate positions of most \ac{lbs} applications, we found high-security applications are able to detect these position manipulations. For example, the mobile banking application will locate the device by using multiple position sources (e.g., \ac{gnss}, networks, and IP), then check if they are consistent and within a permitted area. Revolut has location-based security that validates the location of the phone with the location of the offline payment. 

Method: We use a \ac{gnss} simulator alongside a Wi-Fi router (Linksys WRT1200AC) to generate and broadcast \ac{gnss} signals and Wi-Fi beacons (retrieved from public databases \cite{BobArkUht:J23}) at a predetermined spoofed position. In addition, we relay all TCP and UDP packets to a cloud server near the spoofed position using the Wi-Fi router and iptables tool. The iptables configurations are provided below. 

Impact: The attack carefully crafts the wireless signals used for positioning to make them consistent with the spoofed position, demonstrated in Figure~\ref{fig:coratt}. Notably, since Wi-Fi beacons or Bluetooth signals can be crafted, no physical presence near the spoofed location is required to execute this attack.
\begin{lstlisting}[language=bash, basicstyle=\footnotesize, breaklines=true, caption=An example of iptables rules for a proxy server running on the IP \texttt{<TARGET\_IP>} and port \texttt{12345}.]
# Policy Routing Setup
ip rule add fwmark 1 table 100 
ip route add local 0.0.0.0/0 dev lo table 100
# Proxy for LAN Devices
iptables -t mangle -N LAN
iptables -t mangle -A LAN -d 127.0.0.1/32 -j RETURN
iptables -t mangle -A LAN -d 224.0.0.0/4 -j RETURN 
iptables -t mangle -A LAN -d 255.255.255.255/32 -j RETURN 
iptables -t mangle -A LAN -d 192.168.0.0/16 -p tcp -j RETURN
iptables -t mangle -A LAN -d 192.168.0.0/16 -p udp ! --dport 53 -j RETURN
iptables -t mangle -A LAN -j RETURN -m mark --mark 0xff
iptables -t mangle -A LAN -p udp -j TPROXY --on-ip  <TARGET_IP> --on-port 12345 --tproxy-mark 1
iptables -t mangle -A LAN -p tcp -j TPROXY --on-ip  <TARGET_IP> --on-port 12345 --tproxy-mark 1
iptables -t mangle -A PREROUTING -j LAN
\end{lstlisting}

All these demonstrations of the attacks were conducted in an ethical way and did not affect others. The hardware implementation details are provided in Section~\ref{sec:experi}.

\begin{figure}
\begin{centering}
\includegraphics[trim={0 0 0 0},clip,width=\columnwidth]{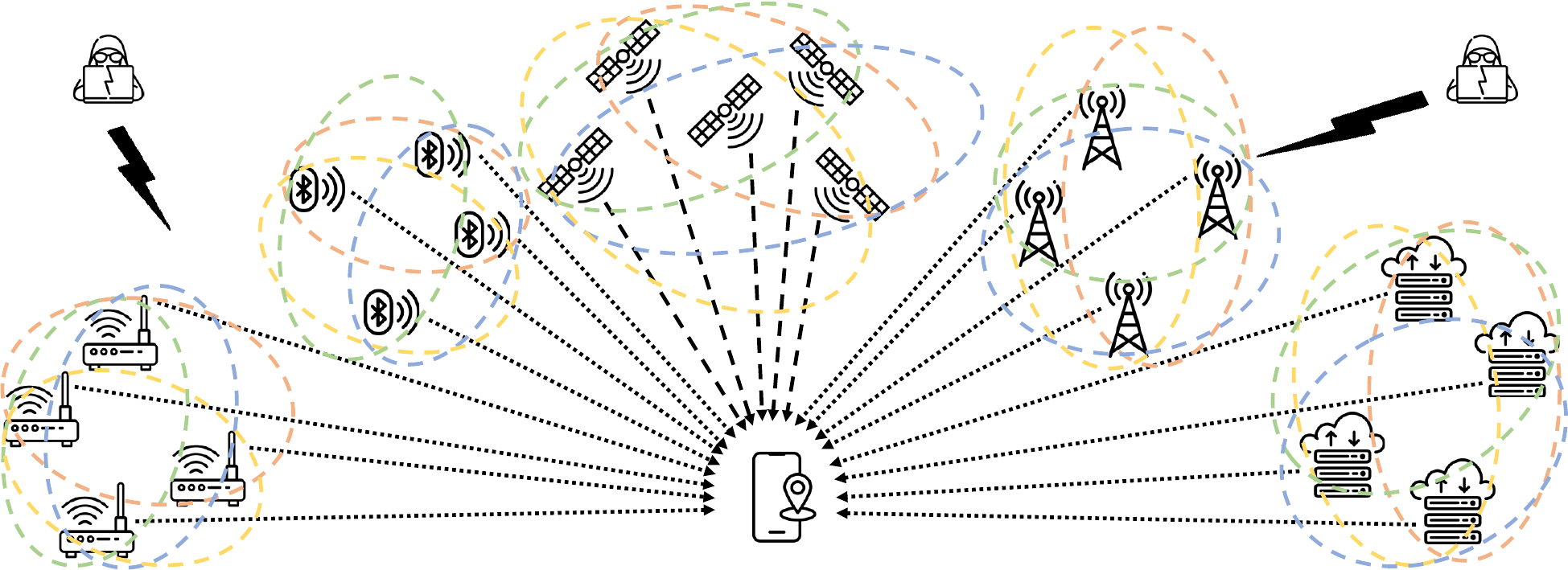}
\par\end{centering}
\caption{Generating subsets of ranging information for cross-validation.}
\label{fig:scheme}
\end{figure}

\section{Defense Scheme}
\label{sec:prosch}
This section describes a defense scheme deployable on mobile platforms like smartphones, utilizing information readily available on such devices. The scheme operates without relying on a trusted location source and considers that any external signal contributing to localization could be manipulated, as demonstrated by the aforementioned attacks.

\subsection{Scheme Outline}
\label{subsec:schout}
We propose an extended \ac{raim} method leveraging opportunistic information to detect \ac{lbs} position manipulation scams. This approach integrates multiple information sources, including \ac{gnss}, \ac{sop} (Wi-Fi, cellular, Bluetooth, GeoIP, etc.), and motion (speed, acceleration, and orientation), to improve detection accuracy. By systematically integrating diverse ranging information into subsets corresponding to heterogeneous infrastructures, the fusion process captures the diverse statistical characteristics (e.g., range, uncertainty, accuracy) of different signal types, as illustrated in Figure~\ref{fig:scheme}. Temporary position estimation for each subset is computed using the corresponding positioning algorithm and then smoothed using movement information from onboard sensors. By comparing the fused position with uncertainty against the original temporary estimations from subsets, the scheme assesses the likelihood that ranging information has been manipulated. Our scheme includes two primary stages: subset generation and position fusion.

In the first stage, real-time ranging information is collected from multiple infrastructures in Section~\ref{subsec:rawdat}. Subsets are created by combining ranging information associated with known anchor positions, in which Wi-Fi \acpl{ssid} can be filtered using an \ac{llm}-based semantics-matching technique in Section~\ref{subsec:llmdat}. We also apply a sampling strategy (Section~\ref{subsec:samstr}) to limit the total number of subsets from Section~\ref{subsec:sizcom}, thereby reducing computational overhead. Finally, temporary position estimates are computed for each subset and infrastructure in Section~\ref{subsec:posmet}. 

In the second stage, onboard sensors collect speed, acceleration, and orientation data to refine temporary position estimates from the first stage. The polynomial regression filtering process in Section~\ref{subsec:motinf} smoothes position estimations with physical constraints derived from the platform's kinematic behavior. The filtered positions are then fused with their positioning uncertainties (Section~\ref{subsec:uncmod}), represented as probability density functions. A composite function (Section~\ref{subsec:likfun}) is derived from these densities to enable probabilistic cross-validation and exclusion of manipulated signals. 

\subsection{Subset Generation}
\label{subsec:subgen}
\subsubsection{Raw Data Preprocessing}
\label{subsec:rawdat}
The input data is similar to other crowd-sourced data used for positioning and \ac{lbs} applications. It opportunistically records information from \ac{gnss} satellites, Wi-Fi \acpl{ap}, cellular \acpl{bs}, Bluetooth devices, GeoIP \ac{rtt} servers, and onboard sensors. \Ac{gnss} signal data includes received times, satellite positions, accuracy, etc. Wi-Fi beacons include \ac{bssid}, \ac{ssid}, and \ac{rssi}. Cellular data includes cell ID, and \ac{rssi}. Bluetooth beacons include \ac{mac} and \ac{rssi}. GeoIP data includes IP address of the platform and \ac{icmp} messages containing \ac{rtt} like ping values to data centers around the world. Onboard sensors provide motion measurements. All of this information has timestamps and aligns with each other type, which is called temporal alignment. 

In \ac{gnss}, pseudorange is ranging information used for positioning. It represents an approximate distance between the satellite and the \ac{gnss} receiver, containing a consistent bias due to receiver clock error. The calculation is the time (the signal takes to reach the receiver) multiplied by the speed of light. Pseudorange error accumulates due to receiver clock drift (caused by inexpensive quartz oscillators), potentially reaching hundreds of meters within seconds without correction. Although the pseudoranges are not the actual distances, the clock is used to measure the pseudoranges at the same time, so all pseudoranges have the same clock error. Then, in the benign environment, this ranging information can still accurately locate the receiver by using measurements from at least four satellites to solve for receiver position (3D) and clock error. Under \ac{gnss} jamming attack, pseudoranges cannot be derived partially or entirely. Under spoofing, some of the pseudoranges are falsified by the attacker, i.e., the derived ranging information deviates from the benign one. 

For Wi-Fi, cellular, and Bluetooth, the received opportunistic ranging information is primarily \ac{rssi}, a negative number in dBm, which follows the log-distance path loss model. The closer the device, the stronger the signal, and vice versa. This ranging information provides a generally less accurate distance approximation than \ac{gnss} pseudoranges. In jamming, the platform cannot receive valid packets, preventing \ac{rssi} derivation. In spoofing, attackers typically use one or more rogue \ac{ap} or \ac{bs} to broadcast signals that alter the set of observed \ac{rssi} values. 

For GeoIP, both the IP address and ranging information from \ac{icmp} utilities are used. \Ac{icmp} utilities (e.g., ping and traceroute) provide time delay (e.g., \ac{rtt}). The delay multiplied by half the speed of light should be an approximation of the distance between the platform and the GeoIP \ac{rtt} server. Delay manipulation attacks, potentially operating at lower network layers, could intercept and forward \ac{icmp} messages to induce artificial delays, leading to inflated range estimates. In the situation of attacks using transparent proxy, this proxy is usually running on the network layer, so only TCP and UDP messages are managed rather than \ac{icmp} messages. Firewalls may be used for dropping or rejecting the \ac{icmp} messages similar to jamming attack, then ranging information cannot be derived from ping values. 

Most importantly, we have a database of anchors, containing positions of Wi-Fi \acpl{ap}, cellular \acpl{bs}, Bluetooth devices, and GeoIP servers, which play a role like \ac{gps} ephemeris. Among all the anchors in the database, we have data cleaning to eliminate incorrect or non-fixed anchors, e.g., personal hotspots, Bluetooth headphones, and public transport Wi-Fi. 

\subsubsection{LLM-based Data Cleaning}
\label{subsec:llmdat}
Our Wi-Fi \ac{ap} positions are from WiGLE.net \cite{BobArkUht:J23}. Since the data is crowd-sourced without a guarantee of quality, we add a process of data cleaning. The process of matching Wi-Fi \ac{ap} \ac{ssid} to their corresponding place names involves leveraging a \ac{llm} to extract relevant text semantic information, looking up \ac{bssid} in the manufacturing list\footnote{\url{https://standards-oui.ieee.org/} or \url{https://macvendors.com/}} to get the company name, and utilizing \ac{poi} APIs for mapping and validating. This process can also be generalized to Bluetooth devices.

The following steps outline the procedure: Step 1 is filtering fixed places. It inputs the \ac{ssid} name of the \ac{ap} using the prompt ``Is this Wi-Fi \ac{ssid} from a static or mobile hotspot: \texttt{SSID\_NAME}? Its manufacturer is \texttt{ORGANIZATION\_NAME}. Please answer static or mobile only.'' This prompt directs the \ac{llm} to distinguish between \acpl{ap} associated with fixed locations and those from mobile devices based on semantics. If \texttt{ORGANIZATION\_NAME} is not available in the manufacturing list, it is replaced by ``not found in IEEE OUI''. By filtering out mobile \acpl{ap}, we focus on identifying fixed places. Step 2 is extracting keywords. It inputs the \ac{ssid} name of the \ac{ap} using the prompt ``Then, can you extract some keywords of the place name from \texttt{SSID\_NAME}? Please answer keywords directly. If not, leave it blank.'' This step prompts the \ac{llm} to extract relevant keywords indicative of the place name associated with the \ac{ap}. The \ac{llm} analyzes the \ac{ssid} to identify keywords that provide insights into the location. Step 3 is searching \ac{poi} API. The extracted keywords from the previous step are used to query a \ac{poi} API. This API searches for relevant places based on the provided keywords. The search result returns a list of potential matches along with their coordinates. The first coordinates in the search result represent the matched location corresponding to the \ac{ap}'s place name.

\subsubsection{Size and Combination}
\label{subsec:sizcom}
Subsets are generated to explore all possible combinations of anchors, from the minimum size required by positioning to the maximum. This process does not assume the number of attacked ranging information (e.g., spoofed pseudoranges, rogue \acpl{ap}, and delayed \acpl{rtt}), guaranteeing to encompass any manipulations. 

For \ac{gnss}, at least four pseudoranges from satellites can determine the receiver position (latitude, longitude, and altitude) and synchronize the receiver clock offset/error, i.e., the size is from four to $J^1(t)$ and then the number of subsets is $\sum_{i=4}^{J^1(t)}C(J^1(t),i)$, where $J^1(t)=|\mathcal{J}^1(t)|$. For \acpl{ap}, \acpl{bs}, or Bluetooth anchors, the receiver position can be determined using at least three \acpl{rssi} in trilateration, and the clock error cannot be estimated. Similarly, for GeoIP, at least three \acpl{rtt} to determine the rough position. Hence, their number of subsets is $\sum_{i=3}^{J^m(t)}C(J^m(t),i)$, where $J^m(t)=|\mathcal{J}^m(t)|,\forall m>1$. These subsets of ranging information (associated with anchor) indexes, $j$ (from $\rho_j^m(t)$), are denoted as $\mathcal{S}_l^m(t)$, where $l=1,2,...,L^m(t)$ and $L^m(t)$ is the total number of subsets for $m$-th infrastructure.

\subsubsection{Sampling Strategy}
\label{subsec:samstr}
Due to the number of generated subsets being huge and the subsets used for localization leading to sizable computational complexity, we employ a subset sampling strategy. 

The simplest one is random sampling, where each subset is selected according to a predefined probability distribution. For example, discrete uniform distribution makes $l$-th subset in $m$-th infrastructure is equally likely to be chosen as each other. It is designed to deal with a variety of subsets without introducing additional bias or skew to $\hat{\mathbf{p}}_{\text{usr}}(t)$. This technique is simple yet robust and shown experimentally to have minimal impact on the cross-validation process. 

Another strategy is greedy \ac{dop} expansion. It searches for a set of the smallest number of ranging information (e.g., satellites) that allows for a valid position solution while ensuring \ac{dop} meets a quality threshold. Then, it iteratively adds a satellite that maintains the \ac{dop} quality until all available satellites are considered. Now we have the first subset for $m$-th infrastructure, $\mathcal{S}_1^m(t)$, and then the process is repeated for the remaining satellites to have $\mathcal{S}_l^m(t),l=2,3,...,L^m(t)$.

Other strategies include mixed-integer (nonlinear) programming that combines subset generation with positioning to separate benign and spoofing signals. Through subset sampling, we ensure that our detection scheme remains powerful and adaptable to heterogeneous opportunistic information sources and attack types. Significantly, it allows for flexible reduction of the computational load. 

\subsubsection{Positioning Methods}
\label{subsec:posmet}
In order to use the heterogeneous ranging information provided by multiple infrastructures, we have various positioning methods for the subsets from them, e.g., trilateration, multilateration, Geolocation APIs, and location fingerprinting. They provide a position estimation together with an uncertainty value. 

Trilateration for \ac{gnss} Single Point Positioning is based on code observations of pseudoranges. The observations are affected by errors such as atmospheric delay, satellite clock errors, and receiver clock errors. GLONASS and \ac{gps} have some differences in the way ionospheric and tropospheric delays are modeled. Additionally, GLONASS uses a different frequency band than \ac{gps}, so the wavelength of the carrier wave is different. The \ac{gnss} observation equations are $\rho_j^m(t)=\lVert\mathbf{p}_{\text{usr}}(t)-\boldsymbol{\alpha}_j^m(t)\rVert+c \delta t_{\text{usr}}+\epsilon_j^m(t)$, where $c \delta t_{\text{usr}}$ is the
receiver clock bias term, and $\epsilon_j^m(t)$ includes atmospheric delays, satellite clock errors, multipath, and noise. Then, the positioning uses the pseudoranges between the receiver and the satellite positions as the main input to calculate the receiver position: $\lVert\hat{\mathbf{p}}_{\text{usr}}(t)-\boldsymbol{\alpha}_j^m(t)\rVert=\rho_j^m(t), \forall j \in \mathcal{S}_l^m(t)$, which can be linearized by Taylor expansion and solved by least squares. The typical accuracy for standalone \ac{gnss} is 5 meters. 

Geolocation APIs are widely used in terrestrial network-based positionings, which propose a weighted least squares problem to minimize the weighted sum of squared distances between anchors and estimated position \cite{Mozilla2023}. These weights are determined based on the inverse square of the ranging information (i.e., \ac{rssi}): 
\begin{equation}
    \underset{\hat{\mathbf{p}}_{\text{usr}}(t)}{\mathop{\min}}\quad \sum_{j \in \mathcal{S}_l^m(t)} \left(\frac{\lVert\hat{\mathbf{p}}_{\text{usr}}(t)-\boldsymbol{\alpha}_j^m(t)\rVert}{\rho_j^m(t)}\right)^2
    \label{eq:posmetgeo}
\end{equation}
which can be solved by SciPy's least squares. The outdoor positioning accuracy ranges from 10 to 100 meters, while for indoor cases, the accuracy is much better, at around 5 meters. 

GeoIP positioning combines tabulation-based and delay-based IP geolocation \cite{AbdMatVan:C17}. Tabulation-based IP geolocation provides a lookup table to map an IP address to an estimated position. Delay-based IP geolocation usually uses 10 to 20 \acpl{rtt} as ranging information with anchors. It first maps \ac{rtt} to distance based on a fitted function from training data. Then, the position is estimated as the centroid of the intersection of circles whose centers are the anchors and radii are the distances. The global positioning accuracy should be tens to hundreds of kilometers. 

Fingerprint-based positioning is quite different from the aforementioned methods. It is popular for indoor Wi-Fi positioning and needs to build a database of fingerprints (mapping \ac{rssi} vectors to known client positions) \cite{AsaGhaSarMul:J22,AsaMagAbd:J25}. Given a subset $\{\rho_j^m(t) \mid j \in \mathcal{S}_l^m(t)\}$ used for positioning, and the \ac{rssi} fingerprint database $\mathcal{T}=\left\{ \{\rho_j^m(t) \mid j \in \mathcal{J}^m(t)\},\mathbf{p}_{\text{usr}}(t) \right\}_{t=T_1}^{T_2}$ pre-surveyed in a benign environment, we find the top $K$ similar fingerprints in $\mathcal{T}$. The similarity scoring function of the $(l,m)$-th subset and \ac{rssi} fingerprint at $t'$ is defined as
\begin{multline}
    f\left(\{\rho_j^m(t) \mid j\in\mathcal{S}_l^m(t)\},\{\rho_j^m(t') \mid j \in \mathcal{J}^m(t')\}\right)\\
    =\sum_{j \in \mathcal{J}^m(t)}\frac{\mathbb{I}\left\{ j \in \mathcal{S}_l^m(t) \right\} }{\max(\left|\rho_j^m(t)-\rho_j^m(t')\right|,d_{\mathrm{min}})}
\end{multline}
where $\mathbb{I}\{\mathrm{A}\}$ takes the value 1 when the condition A is met, and $d_{\mathrm{min}} > 0$ ensures the denominator is greater than 0. Suppose that the positions in $\mathcal{T}$ associated with the most $K$ similar fingerprints are $\mathbf{p}_{\text{usr}}^{(k)},k=1,2,...,K$, with the scores of similarity $f^{(k)},k=1,2,...,K$. Then, the final result of $\hat{\mathbf{p}}_{\text{usr}}(t)$ is the weighted average:
\begin{equation}
    \hat{\mathbf{p}}_{\text{usr}}(t)=\frac{ \sum_{k=1}^K f^{(k)} \mathbf{p}_{\text{usr}}^{(k)} }{ \sum_{k=1}^K f^{(k)} }
    \label{eq:finweiavg}
\end{equation}
with positioning accuracy usually better than 5 meters.

For each $(l,m)$, we then use $\mathbf{p}_l^m(t) \triangleq \hat{\mathbf{p}}_{\text{usr}}(t)$ based on $\mathcal{S}_l^m(t)$ as the subset positioning result.

\subsection{Position Fusion}
\label{subsec:locfus}
\subsubsection{Motion Information}
\label{subsec:motinf}
We use $\mathbf{p}_l^m(t)$ from the previous stage, along with ${\mathbf{v}}(t)$, ${\mathbf{a}}(t)$, $\boldsymbol{\omega}(t)$ from onboard sensors as input data. The motion data allows us to define a kinetic model that describes the platform's state evolution and provides physical constraints for refining positions. We apply local polynomial regression, guided by these constraints, to filter the noise. Positions, $\mathbf{p}_{\text{usr}}(t), \mathbf{p}_{\text{lbs}}(t), \mathbf{p}_l^m(t)$, are represented in the WGS84 format. $\boldsymbol{\omega}(t) \in \mathbb{R}^3$, from onboard sensors, comprises roll ($\phi$), pitch ($\theta$), and yaw ($\psi$) angles defined in the local right-forward-up (RFU) coordinate frame of the mobile platform. The rotation matrix $\mathbf{R}$ converts the local coordinates to WGS84 coordinates \cite{LiuPap:C23}:
\begin{align*}\mathbf{R}(t) & =\mathbf{R}_{\phi}(t)\mathbf{R}_{\theta}(t)\mathbf{R}_{\psi}(t).
\end{align*}
The state of the mobile platform, denoted by $\big(\mathbf{p}_{\text{usr}}(t),\mathbf{v}(t),\mathbf{a}(t) \big)$, evolves from $t-1$ as follows:
\begin{align*}
\mathbf{p}_{\text{usr}}(t)&=\mathbf{p}_{\text{usr}}(t-1)+\mathbf{R}(t-1)\mathbf{v}(t-1)\\
&\qquad+\frac{1}{2}\mathbf{R}(t-1)\mathbf{a}(t-1)+\mathbf {n}\\
\mathbf{v}(t)&=\mathbf{v}(t-1)+\mathbf{a}(t-1)+\mathbf {n}
\end{align*}
where $\mathbf {n}$ models noise. The state transition matrix is
\begin{equation}
    \mathbf {F}(t) ={\begin{bmatrix}\mathbf{1}&\mathbf{R}(t)\\\mathbf{0}&\mathbf{1}\end{bmatrix}}
\end{equation}
and the control-input matrix is $\mathbf {B}(t) =\begin{bmatrix}\mathbf{\frac{1}{2}}\mathbf{R}(t) & \mathbf{1}\end{bmatrix} ^\text{T}$. We denote the estimated state after movement as 
\begin{equation}
    \begin{bmatrix}\overline{\mathbf{p}}_l^m(t)\\\overline{\mathbf{v}}(t)\end{bmatrix}=\mathbf {F}(t-1) \cdot \begin{bmatrix}\mathbf{p}_l^m(t-1)\\\mathbf{v}(t-1)\end{bmatrix}+\mathbf{B}(t-1) \cdot \mathbf{a}(t-1)
    \label{eq:eststamot}
\end{equation}
where $l=1,2,...,L^m(t), m=1,2,...,M$. 

Then, the refined position, denoted $\hat{\mathbf{p}}_l^m(t)$, is determined by solving a local polynomial regression problem for each subset. We use an estimator $\hat{\mathbf{p}}_l^m(t) = \mathbf{W}\mathbf{t}$, where $\mathbf{W} \in \mathbb{R} ^{3 \times (n+1)}$ is a matrix of polynomial coefficients, $n$ is the order of the polynomial regression, $\mathbf{t}$ is a $(n+1)$ dimensional vector and $[\mathbf{t}]_i=t^{i-1}$, and $\mathbf{W}$ at $(l,m,t)$ is calculated from the local polynomial regression problem:
\begin{equation}
    \begin{array}{*{20}{c}}
    {\mathop {\min }\limits_{\mathbf{W}} }&{\sum\limits_{t'=t-w}^{t} K_\text{loc}(t-t') \cdot \lVert\mathbf{W} \mathbf{t'}-\mathbf{p}_l^m(t')\rVert^2} \\ 
    {\text{s.t.}}&{\lVert\mathbf{W} \mathbf{t} - \overline{\mathbf{p}}_l^m(t)\rVert \le \epsilon_\text{t}}
    \end{array}
\label{eq:proall}
\end{equation}
where $w$ is the rolling window size and $K_\text{loc}(t-t')$ is a kernel function assigning a scalar value to ensure the closer data has a higher weight, e.g., $\exp \left(-(t-t')^2\right)$. $\epsilon_\text{t}$ in the constraint represents a small tolerance number for $\overline{\mathbf{p}}_l^m(t)$ to ensure that the estimated position adheres to the movement within a short time duration. Note that $\mathbf{p}_l^m(t')$ may not be available during the entire window $w$; therefore, we also use other verified $\hat{\mathbf{p}}_{\text{usr}}(t')$ to complement missing positions. 

The second derivative of the objective function in \eqref{eq:proall} with respect to $\mathbf{W}$ is $2 \cdot \sum\limits_{t'=t-w}^{t}  K_\text{loc}(t-t')\cdot (\mathbf{t}\cdot \mathbf{t}^\top ) \otimes \mathbb{I}$, where $\otimes$ denotes the Kronecker product. This derivative is always a positive definite matrix. Additionally, since the constraints are affine functions, the problem is convex and solvable in polynomial time using Lagrange multipliers. After solving the problem, we obtain $\hat{\mathbf{p}}_l^m(t)$.

\subsubsection{Uncertainty Modeling}
\label{subsec:uncmod}
$\hat{\boldsymbol{\sigma}}_l^m (t) \in \mathbb{R}^3$ is the uncertainty corresponding to $\hat{\mathbf{p}}_l^m(t)$. 

For trilateration of \ac{gnss}, position \ac{dop} is used as uncertainty. Denote $\mathbf{Q}$ as the covariance matrix of the least squares solution to the navigation equations: 
\begin{equation}
\mathbf{Q} = (\mathbf{A}^\top\mathbf{A})^{-1}
\end{equation}
where $\mathbf{A}$ is the design matrix and is given by:
\begin{equation}
\mathbf{A} = \begin{bmatrix}
\frac{\partial\rho_j^m}{\partial x} & \frac{\partial\rho_j^m}{\partial y} & \frac{\partial\rho_j^m}{\partial z} & 1\\
\vdots & \vdots & \vdots & \vdots
\end{bmatrix},\quad j \in \mathcal{S}_l^m(t)
\end{equation}
and $\frac{\partial\rho_j^m}{\partial x}$, $\frac{\partial\rho_j^m}{\partial y}$, and $\frac{\partial\rho_j^m}{\partial z}$ are the partial derivatives of the pseudorange $\rho_j^m$ with respect to the receiver position coordinates $x$, $y$, and $z$, respectively, in $\hat{\mathbf{p}}_l^m(t)$. Once $\mathbf{Q}$ is obtained, the \ac{dop} can be calculated by taking the square root of the trace of $\mathbf{Q}$:
\begin{equation}
(\sigma_x,\sigma_y,\sigma_z,\sigma_t) \triangleq \sqrt{\text{Tr}(\mathbf{Q})}.
\end{equation}
Then, the uncertainty $\hat{\boldsymbol{\sigma}}_l^m (t)$ is $\sqrt{\sigma_x^2+\sigma_y^2+\sigma_z^2} \cdot \mathbf{1}_3$.

For Geolocation APIs and other least squares algorithms, the uncertainty is represented by the residual of least squares, $\sum_{j \in \mathcal{S}_l^m(t)} \left(\frac{\lVert\hat{\mathbf{p}}_{\text{usr}}(t)-\boldsymbol{\alpha}_j^m(t)\rVert}{\rho_j^m(t)}\right)^2$ in \eqref{eq:posmetgeo}. 

For fingerprint-based positioning, the average of $\left\{f^{(k)}\right\}_{k=1}^K$ in \eqref{eq:finweiavg} serves as the uncertainty of positioning, $\hat{\boldsymbol{\sigma}}_l^m(t)$.

For any other positioning technique, we can use the residual vector of the local polynomial regression to model the uncertainty of the estimated positions.

\subsubsection{Likelihood of Attack}
\label{subsec:likfun}
To detect possible manipulation of positioning data, we denote each temporary position of subsets as a Gaussian distribution with mean $\hat{\mathbf{p}}_l^m(t)$ and standard deviation $\hat{\boldsymbol{\sigma}}_l^m(t)$. This results in the following probability density function:
\begin{equation}
    f_{l,t}^m(\mathbf{p})={\frac {1}{\hat{\boldsymbol{\sigma}}_l^m(t) {\sqrt {2\pi }}}}\exp \left(-{\frac {1}{2}}\left({\frac{\mathbf{p} - \hat{\mathbf{p}}_l^m(t)}{\hat{\boldsymbol{\sigma}}_l^m(t)}}\right)^2\right)
\end{equation}
where the operations are point-wise. 

We then define a likelihood-based consistency score to evaluate how well the \ac{lbs} position, $\mathbf{p}_{\text{lbs}}(t)$, aligns with the fused distribution formed by all temporary positions derived from the subsets: 
\begin{equation}
    f_t(\mathbf{p}_{\text{lbs}}(t))=1-\left(\prod_{m=1}^{M}\left(\prod_{l=1}^{L^m(t)} f_{l,t}^m(\mathbf{p}_{\text{lbs}}(t))\right)^{\frac{1}{L^m(t)}}\right)^{\frac{1}{M}}
\end{equation}
which uses the geometric mean and thus is more sensitive to outliers than the arithmetic mean. A higher value of $f_t(\mathbf{p}_{\text{lbs}}(t))$ implies greater inconsistency between the \ac{lbs} position and the inferred distribution. To decide whether position manipulation is occurring, we predefine a threshold $\Lambda_f$ based on the \ac{roc} curve. An alarm is triggered if the detection score $f_t(\mathbf{p}_{\text{lbs}}(t))$ is greater than $\Lambda_f$. The \ac{roc} curve represents the trade-off between false alarm rates and detection accuracy as a function of $\Lambda_f$. Given a target maximum false alarm rate, the corresponding threshold $\Lambda_f$ is selected accordingly. This process will be shown in our experimental evaluation. 

\subsubsection{Recovering Position}
First, we compute a preliminary fused position $\tilde{\mathbf{p}}(t)$ using a weighted average of all estimated positions:
\begin{equation}
    \tilde{\mathbf{p}}(t)=\frac{\sum_{m=1}^{M}\sum_{l=1}^{L^m(t)}\frac{\hat{\mathbf{p}}_l^m(t)}{\hat{\boldsymbol{\sigma}}_l^m(t)}}{\sum_{m=1}^{M}\sum_{l=1}^{L^m(t)}\frac{1}{\hat{\boldsymbol{\sigma}}_l^m(t)}}
    \label{eq:gauprepos}
\end{equation}
where the operations are also point-wise. Then, we compute the deviation between the $l$-th positioning result of $m$-th infrastructure and the fused position at time $t$ is
\begin{equation}
    d_l^m(t)=\lVert \hat{\mathbf{p}}_l^m(t) - \tilde{\mathbf{p}}(t) \rVert.
\end{equation}
Subsets whose deviation exceeds a statistical threshold $\Lambda_d$ are distinguished as a set that contains spoofed ranging information. The threshold is defined by
\begin{equation}
    \Lambda_d = \mathbb{E}[d_l^m(t)] + n_\Lambda \cdot \sqrt{\mathbb{V}[d_l^m(t)]}
\label{eq:gaumixthr}
\end{equation}
where $n_\Lambda$ is a factor controlling coverage, typically taking the value 3, according to the $3\sigma$ rule of thumb. The subset exclusion process can be performed recursively until the selected subsets no longer change. 

Upon validating all subsets $\mathcal{S}_l^m(t)$ for $m=1,2,...,M$ and $l=1,2,...,L^m(t)$, the validated benign subsets can be used to recover the user's actual position $\mathbf{p}_{\text{usr}}(t)$. 
Denote the set of the index $l$ of benign subsets for $m$-th infrastructure as $\mathcal{L}^m(t)$. Then, the recovered position, $\hat{\mathbf{p}}(t)$, tailored from $\tilde{\mathbf{p}}(t)$ is
\begin{equation}
    \hat{\mathbf{p}}(t)=\frac{\sum_{m=1}^{M}\sum_{l \in \mathcal{L}^m(t)}\frac{\hat{\mathbf{p}}_l^m(t)}{\hat{\boldsymbol{\sigma}}_l^m(t)}}{\sum_{m=1}^{M}\sum_{l \in \mathcal{L}^m(t)}\frac{1}{\hat{\boldsymbol{\sigma}}_l^m(t)}}
\end{equation}
which is together with the attack alarm status. We use $\hat{\mathbf{p}}(t)$ as an alternative to $\mathbf{p}_{\text{lbs}}(t)$ if the likelihood of position manipulation is high. 

\subsection{Theoretical Analysis}
\label{sec:theana}
To theoretically analyze the performance of the proposed \ac{raim}, we set the positioning errors and uncertainties to sufficiently small numbers while preserving attacker-induced deviations and errors \cite{VanBro:C94,LiuPap:C24}. 
Then, the preliminary position in \eqref{eq:gauprepos} becomes
\begin{equation}
    \tilde{\mathbf{p}}(t)=\frac{\sum_{m=1}^{M}\sum_{l=1}^{L^m(t)}\hat{\mathbf{p}}_l^m(t)}{\sum_{m=1}^{M}L^m(t)}
    \label{eq:theprepos}
\end{equation}
and we choose the parameter $n_\Lambda=0$ for the threshold in \eqref{eq:gaumixthr} as $\Lambda_d = \mathbb{E}[d_l^m(t)]$. 
In our context, the geometry of satellite and other infrastructure anchors is assumed to be of good quality for positioning. We consider different numbers of infrastructures in two attack scenarios: (i) coordinated spoofing, where spoofed ranging information is jointly crafted across all infrastructures with a specific spoofing position in mind, and (ii) uncoordinated spoofing otherwise, where spoofed ranging information is independently (and potentially randomly) chosen.
\subsubsection{Single Infrastructure}
To carry out the multilateration algorithm, we denote $N_{\text{min}}$ as the minimum number of anchors required for positioning, e.g., $N_{\text{min}}=4$ for \ac{gnss}. Similarly, $N_{\text{anc}}$ is the total number of anchors (e.g., satellites) providing ranging information, and $N_{\text{adv}}$ is the number of ranging information with deviations caused by the attacker.
\begin{lem}
    Suppose that $N_{\text{anc}}-N_{\text{adv}}>N_{\text{min}}$. Then, $\mathbf{p}_{\text{usr}}(t)$ can be recovered from uncoordinated spoofing.
    \label{lem:uncattrec}
\end{lem}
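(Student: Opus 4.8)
The plan is to use the redundancy hypothesis to pin down a consensus value among the subset estimates and then show that the recursive exclusion of Section~\ref{sec:theana} converges to it.

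\textbf{Step 1 (a benign cluster exists).} Since $N_{\text{anc}}-N_{\text{adv}}>N_{\text{min}}$, there are at least $N_{\text{min}}+1$ benign anchors, hence at least two distinct subsets $\mathcal{S}_l^m(t)$ formed from benign anchors only (the full benign set and any of its $N_{\text{min}}$-element subsets). In the idealized regime of Section~\ref{sec:theana} (positioning errors set to $0$ while attacker-induced deviations are preserved), each such subset returns exactly $\hat{\mathbf{p}}_l^m(t)=\mathbf{p}_{\text{usr}}(t)$, so $\mathbf{p}_{\text{usr}}(t)$ appears among the subset estimates with multiplicity at least two. The strict inequality is needed precisely here: with $N_{\text{anc}}-N_{\text{adv}}=N_{\text{min}}$ there is a single benign subset and $\mathbf{p}_{\text{usr}}(t)$ cannot be told apart from the contaminated estimates.

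\textbf{Step 2 (contaminated estimates are generic and distinct).} Any subset containing a spoofed anchor solves a multilateration system carrying at least one attacker-chosen term; under \emph{uncoordinated} spoofing these terms are not jointly crafted toward a common false position, so with probability one over the attacker's choices every contaminated subset returns a point other than $\mathbf{p}_{\text{usr}}(t)$, and distinct contaminated subsets return distinct points. Hence $\mathbf{p}_{\text{usr}}(t)$ is the unique value attained by two or more of the subset estimates.

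\textbf{Step 3 (the recursion keeps the benign cluster and removes everything else).} I would show that recursive exclusion with $\Lambda_d=\mathbb{E}[d_l^m(t)]$ (that is, $n_\Lambda=0$) terminates with exactly the benign subsets surviving, so the recovered $\hat{\mathbf{p}}(t)$ equals their average, namely $\mathbf{p}_{\text{usr}}(t)$. The recursion can only halt at a surviving set $\mathcal{F}$ all of whose members are equidistant from the (unweighted) mean \eqref{eq:theprepos} of $\mathcal{F}$, since otherwise the strictly-above-mean members --- nonempty whenever the deviations are not all equal --- are discarded. If such an $\mathcal{F}$ contained the benign cluster together with $k\ge 1$ contaminated points, that mean would be a common circumcenter of those $k+1$ distinct points which also equals their benign-weighted centroid; this is impossible for $k=1$ and a codimension-$\ge 1$ coincidence for $k\ge 2$, hence ruled out with probability one by Step 2. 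Combined with the fact that the benign members, all sharing deviation $\delta_0=\lVert\mathbf{p}_{\text{usr}}(t)-\tilde{\mathbf{p}}(t)\rVert$, are never the ones above threshold, the recursion removes contaminated subsets one round after another until none remain, leaving $\hat{\mathbf{p}}(t)=\mathbf{p}_{\text{usr}}(t)$.

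\textbf{Main obstacle.} The delicate point is justifying that the benign cluster is never discarded before the contaminated ones. The mean \eqref{eq:theprepos} is dragged toward whichever group is more numerous, and when $N_{\text{adv}}$ is large the contaminated subsets can far outnumber the benign ones, so a priori $\tilde{\mathbf{p}}(t)$ could land far from $\mathbf{p}_{\text{usr}}(t)$ and make the benign subsets look like the outliers. This is exactly where the uncoordinated hypothesis must be used quantitatively: by Step 2 the contaminated estimates do not concentrate, so in every round the outermost of them has deviation at least $\delta_0$ and is removed first, keeping $\delta_0\le\Lambda_d$. For a fully self-contained proof one could instead either assume that the benign subsets are not a vanishing fraction of those sampled, or replace the unweighted mean in the recovery step by the unique high-multiplicity value identified in Step 2; both yield the claim immediately. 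I would take the non-concentration route, as it matches the ``uncoordinated'' assumption, and conclude $\hat{\mathbf{p}}(t)=\mathbf{p}_{\text{usr}}(t)$.
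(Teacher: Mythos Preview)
Your Steps 1--2 are exactly the paper's argument: the hypothesis gives $C(N_{\text{anc}}-N_{\text{adv}},N_{\text{min}})>1$ benign subsets all returning $\mathbf{p}_{\text{usr}}(t)$, while uncoordinated spoofing makes the contaminated estimates random, hence excludable. The paper's proof stops there---it simply asserts that random spoofed positions ``can be excluded'' without tracing through the recursive mechanism of \eqref{eq:theprepos}--\eqref{eq:gaumixthr}.

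Your Step 3 and the Main obstacle go well beyond what the paper provides. The difficulty you flag is real: when contaminated subsets outnumber benign ones, the unweighted mean in \eqref{eq:theprepos} need not lie near $\mathbf{p}_{\text{usr}}(t)$, and the paper never addresses why $\Lambda_d=\mathbb{E}[d_l^m(t)]$ always spares the benign cluster in that regime. Your non-concentration heuristic is the right intuition for the uncoordinated case, but as you note it is not airtight without quantifying how spread out the contaminated estimates must be. The paper offers no such quantification either---it treats the exclusion as evident once randomness is assumed. So your write-up is a strictly more careful version of the same approach, and the gap you identify is one the paper leaves open rather than one you introduced.
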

\begin{proof}
    Under the condition $N_{\text{anc}}-N_{\text{adv}}>N_{\text{min}}$, we obtain the number of benign sets, 
    \begin{equation}
        C(N_{\text{anc}}-N_{\text{adv}}, N_{\text{min}}) > 1
    \end{equation}
    implying that more than one benign position is consistent with each other. Since the spoofed positions from uncoordinated spoofing are random, they can be excluded.
\end{proof}
\begin{prop}
    Suppose that $N_{\text{anc}}-N_{\text{adv}} \ge N_{\text{min}}$. Then, uncoordinated spoofing can be detected. 
\end{prop}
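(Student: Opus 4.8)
The plan is to convert the hypothesis $N_{\text{anc}}-N_{\text{adv}}\ge N_{\text{min}}$ into the existence of a single ``benign witness'' subset whose temporary position is exactly $\mathbf{p}_{\text{usr}}(t)$, and then to show that under uncoordinated spoofing this witness is necessarily inconsistent with the manipulated output, so that the consistency score of Section~\ref{subsec:likfun} rises above its threshold. This is weaker than what Lemma~\ref{lem:uncattrec} needs: a lone benign subset suffices to \emph{detect} an inconsistency, but not to \emph{identify} which subset is clean and thereby recover, which is why the strict inequality of the lemma relaxes to a non-strict one here.

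First I would count benign subsets: since $N_{\text{anc}}-N_{\text{adv}}\ge N_{\text{min}}$, there are $C(N_{\text{anc}}-N_{\text{adv}},N_{\text{min}})\ge 1$ subsets built only from benign anchors, so at least one subset $\mathcal{S}^{\star}$ exists whose ranging information is entirely genuine; in the idealized regime of Section~\ref{sec:theana} its temporary position is $\mathbf{p}_{\text{usr}}(t)$ with vanishing uncertainty. Next I would argue that, because the spoofing is uncoordinated --- the corrupted ranging information is picked independently, without a target position in mind --- any subset containing at least one attacked anchor returns, with probability one, a temporary position $\neq\mathbf{p}_{\text{usr}}(t)$: the corrupted measurements that leave the multilateration solution exactly at the true position form a lower-dimensional, hence null, set. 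Since an ongoing attack means $N_{\text{adv}}\ge 1$, at least one such subset is present, so the family $\{\hat{\mathbf{p}}_l^m(t)\}$ is not a single point and $\mathbf{p}_{\text{lbs}}(t)$ cannot coincide with all of it. Evaluating $f_t$ at $\mathbf{p}_{\text{lbs}}(t)$: if the attack has displaced $\mathbf{p}_{\text{lbs}}(t)$ from $\mathbf{p}_{\text{usr}}(t)$, the Gaussian factor contributed by $\mathcal{S}^{\star}$ is negligible; if not, the factor contributed by some spoofed subset is; either way the geometric mean collapses and $f_t(\mathbf{p}_{\text{lbs}}(t))$ approaches its maximal value $1$, exceeding any admissible $\Lambda_f$, so the alarm fires. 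The same inconsistency is visible in the mitigation test used in Section~\ref{sec:theana}: the deviations $d_l^m(t)$ from $\tilde{\mathbf{p}}(t)$ in \eqref{eq:theprepos} are nonnegative and not all zero, so with $n_\Lambda=0$ the threshold $\Lambda_d=\mathbb{E}[d_l^m(t)]$ is strictly positive and the subset(s) of largest deviation are flagged.

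I expect the genericity step to be the main obstacle: one must rule out the degenerate events in which independently chosen spoofed measurements still leave the solution at $\mathbf{p}_{\text{usr}}(t)$, or in which the scattered spoofed positions and the benign position average out to a $\tilde{\mathbf{p}}(t)$ for which every $d_l^m(t)$ is equal (so none exceeds $\Lambda_d$). These are precisely the coincidences a \emph{coordinated} adversary could arrange but an uncoordinated one realizes only on a measure-zero set, which is where the word ``uncoordinated'' does the work --- and it previews why the coordinated case will require additional, independent infrastructures. A minor edge case worth recording is $N_{\text{anc}}-N_{\text{adv}}=N_{\text{min}}$: then $\mathcal{S}^{\star}$ is the unique benign subset, detection still succeeds by the argument above while recovery genuinely fails, matching Lemma~\ref{lem:uncattrec}'s requirement that $N_{\text{anc}}-N_{\text{adv}}>N_{\text{min}}$.
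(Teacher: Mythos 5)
Your proposal is correct and follows essentially the same route as the paper: the hypothesis guarantees at least one fully benign subset whose position is (in the idealized regime) $\mathbf{p}_{\text{usr}}(t)$, and because uncoordinated spoofing yields positions that coincide with it only on a measure-zero set, the resulting inconsistency is almost surely detected. Your extra detail on how this inconsistency drives the consistency score past its threshold, and the remark that the non-strict inequality suffices for detection but not recovery, are consistent elaborations of the paper's one-line argument.
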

\begin{proof}
    Under the condition $N_{\text{anc}}-N_{\text{adv}} \ge N_{\text{min}}$, we have at least one benign position, which is almost surely inconsistent with a randomly spoofed position. 
\end{proof}
\begin{cor}
    An adversarial subset can contain at most $N_{\text{min}}-1$ benign pseudoranges without being detected.
    \label{cor:advmaxben}
\end{cor}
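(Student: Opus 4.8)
The plan is to obtain Corollary~\ref{cor:advmaxben} as the subset-level contrapositive of the preceding Proposition, so essentially no new machinery is needed beyond making the statement precise. First I would fix terminology: call a generated subset $\mathcal{S}_l^m(t)$ \emph{adversarial} if it contains at least one spoofed pseudorange (an index $j$ whose ranging information was deviated by the attacker), and say it is \emph{undetected} if its smoothed estimate $\hat{\mathbf{p}}_l^m(t)$ stays within $\Lambda_d$ of the fused $\tilde{\mathbf{p}}(t)$ of \eqref{eq:theprepos}, i.e.\ it survives the exclusion step with $n_\Lambda=0$ analyzed in Section~\ref{sec:theana}. With the nominal positioning errors and uncertainties shrunk to negligible values (as in Section~\ref{sec:theana}), a subset built entirely from benign pseudoranges returns $\mathbf{p}_{\text{usr}}(t)$ exactly, whereas any adversarial subset returns a solution that, under uncoordinated (i.e.\ randomly chosen) spoofing, is almost surely different from $\mathbf{p}_{\text{usr}}(t)$, since at least one of its equations is inconsistent with the true position.

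Next I would argue by contradiction. Suppose an adversarial subset $\mathcal{S}$ is undetected yet contains $b \ge N_{\text{min}}$ benign pseudoranges. Because the subset-generation rule of Section~\ref{subsec:sizcom} enumerates every combination of anchors of size between $N_{\text{min}}$ and $N_{\text{anc}}$, each of the $C(b, N_{\text{min}}) \ge 1$ fully benign sub-selections of $\mathcal{S}$ is itself a generated subset and hence contributes the true position $\mathbf{p}_{\text{usr}}(t)$ to the fusion; these benign estimates are mutually consistent and, in the uncoordinated scenario, they together with all other benign subsets dominate $\tilde{\mathbf{p}}(t)$, so $\tilde{\mathbf{p}}(t) \approx \mathbf{p}_{\text{usr}}(t)$. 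On the other hand $\mathcal{S}$ is over-determined with at least one deviating equation, so its least-squares solution is almost surely separated from $\mathbf{p}_{\text{usr}}(t)$, its deviation $\lVert \hat{\mathbf{p}}_l^m(t) - \tilde{\mathbf{p}}(t) \rVert$ exceeds $\Lambda_d = \mathbb{E}[d_l^m(t)]$, and $\mathcal{S}$ is excluded --- contradicting ``undetected.'' Hence $b \le N_{\text{min}}-1$. Equivalently, and perhaps cleaner for the write-up, one can invoke the Proposition directly: if the uncoordinated attack evades the scheme, its contrapositive gives $N_{\text{anc}} - N_{\text{adv}} < N_{\text{min}}$, i.e.\ at most $N_{\text{min}}-1$ benign pseudoranges exist in total, so a fortiori any single (adversarial) subset contains at most $N_{\text{min}}-1$ of them; the same bound also forces every positioning subset of size $\ge N_{\text{min}}$ to include a spoofed pseudorange, consistent with calling it adversarial.

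\textbf{The main obstacle} I expect is the boundary case $b = N_{\text{min}}$: a subset with exactly $N_{\text{min}}$ benign equations together with one or more spoofed equations is over-determined, and I must be sure that adjoining the deviating equation(s) actually perturbs the least-squares point (or inflates the residual-based uncertainty) enough that the deviation exceeds $\Lambda_d$, rather than leaving the solution pinned at $\mathbf{p}_{\text{usr}}(t)$. This is where the ``good geometry'' assumption of Section~\ref{sec:theana} and the modeling assumption that spoofed ranging information genuinely deviates from the benign value --- and, under random/uncoordinated spoofing, does so in a way that almost surely does not re-confirm the true position --- carry the argument. A lighter point to state carefully is that ``detected'' in the corollary must refer to the same deviation test ($n_\Lambda = 0$, negligible nominal errors) used to prove the Proposition, so that the corollary is literally its contrapositive rather than a separate claim.
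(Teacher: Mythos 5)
There is a genuine mismatch between what you prove and what the paper's corollary actually asserts (and is later used for). In the paper, the corollary is a subset-local, purely algebraic statement: the positioning system has $N_{\text{min}}$ unknowns, so $N_{\text{min}}$ linearly independent benign pseudorange equations already pin down the true position; adjoining any spoofed pseudorange (consistent only with a different, attacker-chosen position) yields an overdetermined, \emph{inconsistent} system, and that internal inconsistency is what betrays the subset. Hence an adversarial subset that is to yield a ``valid'' (consistent, undetectable-at-the-subset-level) spoofed solution can include at most $N_{\text{min}}-1$ benign pseudoranges. Crucially, this argument makes no reference to the fused position $\tilde{\mathbf{p}}(t)$, the threshold $\Lambda_d$, or to whether the spoofing is coordinated. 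Your proof instead defines ``undetected'' via the deviation test against $\tilde{\mathbf{p}}(t)$ and argues that benign sub-selections dominate the fusion so that $\tilde{\mathbf{p}}(t)\approx\mathbf{p}_{\text{usr}}(t)$. That premise is exactly what Lemma~\ref{lem:corattrec} and Theorem~\ref{thm:mulcol} set out to establish, and they establish it \emph{by counting valid spoofed estimates using this corollary}; building the corollary on benign dominance of the fusion therefore makes the overall chain circular, and restricting to uncoordinated (random) spoofing makes it unusable in the coordinated-spoofing setting where the corollary is actually invoked.

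Your alternative shortcut --- taking the contrapositive of the Proposition --- proves a different statement as well: evasion of the whole scheme would give $N_{\text{anc}}-N_{\text{adv}}<N_{\text{min}}$, a bound on the \emph{total} number of benign pseudoranges, whereas the corollary bounds the number of benign pseudoranges inside any single adversarial subset even when plenty of benign anchors exist overall. You do gesture at the right ingredient in your ``main obstacle'' paragraph (the combined system is overdetermined with a deviating equation), but you then route the detection through the exclusion threshold rather than through the inconsistency of the subset's own equations. The fix is to drop the fusion machinery entirely and argue as the paper does: with good geometry, $N_{\text{min}}$ benign equations are linearly independent and determine $\mathbf{p}_{\text{usr}}(t)$ uniquely; spoofed equations are likewise consistent only with the spoofed solution; any subset containing both $N_{\text{min}}$ benign pseudoranges and at least one spoofed pseudorange is therefore an overdetermined system with no consistent solution, so it cannot produce an undetected spoofed estimate, which yields the stated bound of $N_{\text{min}}-1$.
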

\begin{proof}
    $N_{\text{min}}$ is determined by unknown variables of the linear system, and it has $N_{\text{min}}$ linearly independent equations. However, the equations for both benign and attack pseudoranges are also linearly independent. $N_{\text{min}}$ benign pseudoranges, when combined with attack pseudoranges, form an overdetermined system.
\end{proof}
\begin{lem}
    Suppose that $\sum_{i=N_{\text{min}}}^{N_{\text{anc}}-N_{\text{adv}}}C(N_{\text{anc}}-N_{\text{adv}},i) > \sum_{i=1}^{N_{\text{adv}}}C(N_{\text{adv}},i)\sum_{j=N_{\text{min}}-i}^{N_{\text{min}}-1}C(N_{\text{anc}}-N_{\text{adv}},j)$.\footnote{In the term $C(N_{\text{anc}}-N_{\text{adv}},j)$, $j>0$ and $j \le N_{\text{anc}}-N_{\text{adv}}$ should hold, or the term takes the value 0.} Then, $\mathbf{p}_{\text{usr}}(t)$ can be recovered from coordinated spoofing.
    \label{lem:corattrec}
\end{lem}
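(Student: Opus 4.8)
The plan is to run the same counting argument that underlies Lemma~\ref{lem:uncattrec} and Corollary~\ref{cor:advmaxben}, and then show that the stated inequality forces the all-benign subsets to form the strict majority cluster, which the recovery procedure of Section~\ref{subsec:likfun} (specialized to $n_\Lambda=0$, $\Lambda_d=\mathbb{E}[d_l^m(t)]$) returns exactly. First I would count the subsets drawn entirely from the $N_{\text{anc}}-N_{\text{adv}}$ unattacked anchors whose size ranges from $N_{\text{min}}$ to $N_{\text{anc}}-N_{\text{adv}}$: there are $\sum_{i=N_{\text{min}}}^{N_{\text{anc}}-N_{\text{adv}}}C(N_{\text{anc}}-N_{\text{adv}},i)$ of them, i.e.\ the left-hand side. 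In the idealized error-free regime of Section~\ref{sec:theana}, each such subset yields a valid, zero-residual estimate equal to $\mathbf{p}_{\text{usr}}(t)$, so all of them collapse onto one point.

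Next I would characterize the adversarial subsets that are not already discarded by the per-subset residual/overdetermination test. By Corollary~\ref{cor:advmaxben}, a subset containing $i\ge 1$ attacked ranges escapes that test only if it carries at most $N_{\text{min}}-1$ benign ranges; and it needs $j\ge N_{\text{min}}-i$ benign ranges to admit a position solution at all. Summing $C(N_{\text{adv}},i)\,C(N_{\text{anc}}-N_{\text{adv}},j)$ over $1\le i\le N_{\text{adv}}$ and $N_{\text{min}}-i\le j\le N_{\text{min}}-1$ (with the degenerate-index convention of the footnote) gives exactly the right-hand side, an (upper) count of the adversarial subsets reaching the fusion stage. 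Because the spoofing is coordinated, all attacked ranges are mutually consistent with a single spoof position $\mathbf{p}_s$, so each surviving adversarial subset either reports $\mathbf{p}_s$ or, when it blends too few attacked with too few benign ranges, reports a spurious point distinct from both $\mathbf{p}_{\text{usr}}(t)$ and $\mathbf{p}_s$; the latter only scatter, so the worst case for detection is that they all coincide at $\mathbf{p}_s$, a cluster of size at most the right-hand side.

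Then I would invoke the recovery step. The hypothesis gives $A:=\text{(number of benign subsets)}>\text{(number of surviving adversarial subsets)}=:B$, so writing $\tilde{\mathbf{p}}(t)$ from \eqref{eq:theprepos} with $A$ positions at $\mathbf{p}_{\text{usr}}(t)$ and $B\le A-1$ at $\mathbf{p}_s$, a one-line computation shows $d_l^m(t)=\|\hat{\mathbf{p}}_l^m(t)-\tilde{\mathbf{p}}(t)\|$ strictly exceeds $\Lambda_d=\mathbb{E}[d_l^m(t)]$ for every adversarial subset while remaining at or below $\Lambda_d$ for every benign one; hence the recursive exclusion drops all adversarial subsets, and once only the mutually consistent benign subsets survive, the fused $\hat{\mathbf{p}}(t)$ equals $\mathbf{p}_{\text{usr}}(t)$. (Any scattered blended points are pruned in earlier rounds and only accelerate this.)

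I expect the main obstacle to be the second step: pinning down exactly which adversarial subsets survive the per-subset test and verifying that the right-hand side of the hypothesis is a faithful (or at least conservative) count of them — in particular handling purely-adversarial subsets that carry no benign range (which still report $\mathbf{p}_s$ self-consistently) under the footnote's $j>0$ convention, and confirming that coordinated spoofing is genuinely the worst case for the clustering stage. Once the counting is settled, the implication ``$A>B$ $\Rightarrow$ threshold $\mathbb{E}[d_l^m(t)]$ separates the two clusters and the recursion terminates at the benign set'' is routine.
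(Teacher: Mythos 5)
Your proposal follows essentially the same route as the paper's proof: the left-hand side counts the mutually consistent all-benign subsets, the right-hand side counts (via Corollary~\ref{cor:advmaxben} and the minimum-size requirement) the adversarial subsets that still yield valid position solutions, and the majority condition then makes $\tilde{\mathbf{p}}(t)$ in \eqref{eq:theprepos} closer to the benign cluster so that $\Lambda_d=\mathbb{E}[d_l^m(t)]$ excludes at least one attacked subset, with the recursion finishing the job. Your explicit two-cluster computation and your remark about purely adversarial ($j=0$) subsets are finer-grained than the paper's wording, but they do not change the argument, which is the same.
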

\begin{proof}
    By definition, the number of benign position estimations is $\sum_{i=N_{\text{min}}}^{N_{\text{anc}}-N_{\text{adv}}}C(N_{\text{anc}}-N_{\text{adv}},i)$ and these are consistent with each other. Since an adversarial subset can contain 0 to $N_{\text{min}}-1$ benign pseudoranges or become overdetermined according to Corollary~\ref{cor:advmaxben}, the number of valid spoofed position estimations is $\sum_{i=1}^{N_{\text{adv}}}C(N_{\text{adv}},i)\sum_{j=N_{\text{min}}-i}^{N_{\text{min}}-1}C(N_{\text{anc}}-N_{\text{adv}},j)$. 

    Under the condition $\sum_{i=N_{\text{min}}}^{N_{\text{anc}}-N_{\text{adv}}}C(N_{\text{anc}}-N_{\text{adv}},i) > \sum_{i=1}^{N_{\text{adv}}}C(N_{\text{adv}},i)\sum_{j=N_{\text{min}}-i}^{N_{\text{min}}-1}C(N_{\text{anc}}-N_{\text{adv}},j)$, we have $\tilde{\mathbf{p}}(t)$ in \eqref{eq:theprepos} will be closer to benign estimations than to at least one spoofed estimation. Thus, $\Lambda_d = \mathbb{E}[d_l^m(t)]$ can exclude at least one satellite under attack. Inductively, this leads to fewer adversarial subsets than before, allowing us to continue solving $\tilde{\mathbf{p}}(t)$ to exclude more satellites under attack, and finally to ensure the exclusion of all attacks. 
\end{proof}
\subsubsection{Multiple Infrastructures}
We denote $N_{\text{min}}^m$ as the minimal number of anchors required by the positioning in the $m$-th infrastructure, $N_{\text{anc}}^m$ as the total number of anchors providing ranging information, and $N_{\text{adv}}^m$ as the number of anchors compromised or controlled by the attacker, providing adversarial ranging information.
\begin{thm}
    Suppose that $\left| \left\{ m \mid N_{\text{anc}}^m - N_{\text{adv}}^m - N_{\text{min}}^m = 0 \right\} \right| > 1 \ \lor\ \exists m,\ N_{\text{anc}}^m - N_{\text{adv}}^m - N_{\text{min}}^m > 0$. Then, $\mathbf{p}_{\text{usr}}(t)$ can be recovered from uncoordinated spoofing.
    \label{thm:mulnon}
\end{thm}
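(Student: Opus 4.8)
The plan is to treat Theorem~\ref{thm:mulnon} as the multi-infrastructure lift of Lemma~\ref{lem:uncattrec}, and to reduce both disjuncts of the hypothesis to a single sufficient condition: the total number of \emph{purely benign} subsets, pooled over all $M$ infrastructures, is at least two, and each of them returns the common estimate $\mathbf{p}_{\text{usr}}(t)$ (in the idealized regime of Section~\ref{sec:theana} where positioning errors and uncertainties are taken to be negligible). First I would establish this count. For infrastructure $m$ the purely benign subsets are exactly the subsets of size between $N_{\text{min}}^m$ and $N_{\text{anc}}^m-N_{\text{adv}}^m$ drawn from its $N_{\text{anc}}^m-N_{\text{adv}}^m$ benign anchors, so their number is $\sum_{i=N_{\text{min}}^m}^{N_{\text{anc}}^m-N_{\text{adv}}^m} C(N_{\text{anc}}^m-N_{\text{adv}}^m,i)$; this equals $1$ precisely when $N_{\text{anc}}^m-N_{\text{adv}}^m=N_{\text{min}}^m$ and is at least $2$ (indeed at least $N_{\text{min}}^m+2$) when $N_{\text{anc}}^m-N_{\text{adv}}^m>N_{\text{min}}^m$. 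Hence if some $m$ satisfies $N_{\text{anc}}^m-N_{\text{adv}}^m>N_{\text{min}}^m$ that infrastructure alone already supplies $\ge 2$ benign subsets --- precisely the single-infrastructure situation of Lemma~\ref{lem:uncattrec} --- while if instead two or more infrastructures satisfy $N_{\text{anc}}^m-N_{\text{adv}}^m=N_{\text{min}}^m$ they contribute one benign subset each, again $\ge 2$ pooled; the disjunction in the statement is exactly the union of these two cases.

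Second, I would argue that these $\ge 2$ benign estimates form the unique consistent cluster. Every subset not counted above contains at least one adversarial anchor, so by Corollary~\ref{cor:advmaxben} it is either overdetermined --- hence its least-squares estimate is generically displaced from $\mathbf{p}_{\text{usr}}(t)$ --- or exactly determined with its solution depending on an independently (``potentially randomly'') chosen spoofed range. Under uncoordinated spoofing these estimates are therefore in general position: almost surely no two coincide and none equals $\mathbf{p}_{\text{usr}}(t)$. Consequently the benign subsets are the only $\ge 2$ mutually consistent estimates, and they all sit at $\mathbf{p}_{\text{usr}}(t)$.

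Third, I would show the recovery recursion used in the analysis --- the idealized preliminary estimate \eqref{eq:theprepos} together with the threshold \eqref{eq:gaumixthr} at $n_\Lambda=0$, i.e.\ $\Lambda_d=\mathbb{E}[d_l^m(t)]$ --- converges to exactly this cluster. A round removes every subset whose deviation from the running centroid $\tilde{\mathbf{p}}(t)$ exceeds the mean deviation; a fixed point is a set on which all deviations are equal, and the benign cluster (all deviations zero, centroid $\mathbf{p}_{\text{usr}}(t)$) is one such fixed point. Starting from any pool that still contains an adversarial subset together with the $\ge 2$ coincident benign subsets, the deviations cannot all be equal --- that would force the generic spoofed estimates to lie on a sphere centered at their own weighted centroid, a probability-zero event --- so at least one subset is strictly above the mean and is discarded; and the coincident benign subsets, whose common deviation never exceeds the average spoofed deviation, are not among the discarded ones. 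By induction the recursion strips all adversarial subsets and terminates at the benign cluster, so the recovered position $\hat{\mathbf{p}}(t)$ equals $\mathbf{p}_{\text{usr}}(t)$; since this covers both disjuncts, recoverability follows.

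I expect the third step to be the main obstacle. The mean-deviation rule is ``obviously'' safe only when benign subsets outnumber the adversarial ones; in general one must lean on the genericity of the uncoordinated spoofed estimates --- no accidental tight clustering far from $\mathbf{p}_{\text{usr}}(t)$, no symmetry about the centroid --- to guarantee that no benign subset is ever expelled and that the iteration cannot stall on a spurious equal-deviation configuration of two or more adversarial estimates. Should a fully rigorous version be required, I would replace the centroid-and-threshold recursion in the analysis by the equivalent selection of the largest mutually consistent cluster of subset estimates, for which the conclusion is immediate from the uniqueness established in the second step.
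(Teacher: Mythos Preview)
Your proposal is correct and takes essentially the same approach as the paper: reduce the disjunctive hypothesis to the existence of at least two purely benign subsets pooled across the $M$ infrastructures, then invoke the consistency-versus-randomness reasoning of Lemma~\ref{lem:uncattrec}. Your second and third steps add rigor on the exclusion recursion that the paper's brief proof leaves entirely implicit (it simply cites Lemma~\ref{lem:uncattrec} after the counting step), but the core argument is the same.
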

\begin{proof}
    $\left| \left\{ m \mid N_{\text{anc}}^m - N_{\text{adv}}^m - N_{\text{min}}^m = 0 \right\} \right| > 1 \ \lor\ \exists m,\ N_{\text{anc}}^m - N_{\text{adv}}^m - N_{\text{min}}^m > 0$ implies either $N_{\text{anc}}^m-N_{\text{adv}}^m=N_{\text{min}}^m$ for at least two different values of $m$, or there exists at least one $m$ for which $N_{\text{anc}}^m - N_{\text{adv}}^m$ is strictly greater than $N_{\text{min}}^m$. Therefore, there exists more than one benign subset. According to Lemma~\ref{lem:uncattrec}, there is more than one benign position among the $M$ infrastructures that can be used for recovering $\mathbf{p}_{\text{usr}}(t)$.
\end{proof}
\begin{thm}
    Suppose that $\sum_{m=1}^M\sum_{i=N_{\text{min}}^m}^{N_{\text{anc}}^m-N_{\text{adv}}^m}C(N_{\text{anc}}^m-N_{\text{adv}}^m,i) > \sum_{m=1}^M\sum_{i=1}^{N_{\text{adv}}^m}C(N_{\text{adv}}^m,i)\sum_{j=N_{\text{min}}^m-i}^{N_{\text{min}}^m-1}C(N_{\text{anc}}^m-N_{\text{adv}}^m,j)$. Then, $\mathbf{p}_{\text{usr}}(t)$ can be recovered from coordinated spoofing.
    \label{thm:mulcol}
\end{thm}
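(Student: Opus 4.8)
The plan is to mirror the argument of Lemma~\ref{lem:corattrec}, but applied to the \emph{pooled} family of subsets $\{\mathcal{S}_l^m(t)\}$ over all infrastructures $m=1,\dots,M$ and all $l=1,\dots,L^m(t)$, since the preliminary fused position \eqref{eq:theprepos} and the threshold $\Lambda_d=\mathbb{E}[d_l^m(t)]$ are both computed over this pool. First I would set up the bookkeeping in the vanishing-error regime of Section~\ref{sec:theana}: every subset containing no adversarial anchor yields the estimate $\mathbf{p}_{\text{usr}}(t)$, and by the counting of Lemma~\ref{lem:corattrec} (resting on Corollary~\ref{cor:advmaxben}) applied per infrastructure, the number of such benign estimates is $B:=\sum_{m=1}^{M}\sum_{i=N_{\text{min}}^m}^{N_{\text{anc}}^m-N_{\text{adv}}^m}C(N_{\text{anc}}^m-N_{\text{adv}}^m,i)$, while the number of subsets that still produce a consistent estimate despite containing $1\le i\le N_{\text{adv}}^m$ adversarial anchors — hence at most $N_{\text{min}}^m-1$ benign anchors by Corollary~\ref{cor:advmaxben} — is $A:=\sum_{m=1}^{M}\sum_{i=1}^{N_{\text{adv}}^m}C(N_{\text{adv}}^m,i)\sum_{j=N_{\text{min}}^m-i}^{N_{\text{min}}^m-1}C(N_{\text{anc}}^m-N_{\text{adv}}^m,j)$. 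The theorem's hypothesis is exactly $B>A$.

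Second, I would run one exclusion round of the recovery procedure and show it removes at least one adversarial subset and no benign subset. Let $\mathbf{q}_s$ be the estimates of the $A$ adversarial subsets, let $d_s:=\lVert\mathbf{q}_s-\tilde{\mathbf{p}}(t)\rVert$, and set $\delta:=\lVert\tilde{\mathbf{p}}(t)-\mathbf{p}_{\text{usr}}(t)\rVert$; the degenerate case where every $\mathbf{q}_s$ already equals $\mathbf{p}_{\text{usr}}(t)$ is trivial, so take $\delta>0$. Since $\tilde{\mathbf{p}}(t)$ is the arithmetic mean of the $B+A$ estimates and $B$ of them sit at $\mathbf{p}_{\text{usr}}(t)$, one has $(B+A)\delta\le\sum_s\lVert\mathbf{q}_s-\mathbf{p}_{\text{usr}}(t)\rVert$, whence $\sum_s d_s\ge\sum_s\big(\lVert\mathbf{q}_s-\mathbf{p}_{\text{usr}}(t)\rVert-\delta\big)\ge(B+A)\delta-A\delta=B\delta$. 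Thus the average adversarial deviation is at least $\tfrac{B}{A}\delta>\delta$, so $\Lambda_d$, being a convex combination of $\delta$ (weight $B$) and this average (weight $A$), satisfies $\Lambda_d>\delta$, and every benign subset (deviation $\delta$) survives; and if every adversarial subset had $d_s\le\Lambda_d$ then $(B+A)\Lambda_d=B\delta+\sum_s d_s<B\Lambda_d+A\Lambda_d$, a contradiction, so at least one adversarial subset exceeds $\Lambda_d$ and is excluded.

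Third, I would close by induction on $\sum_m N_{\text{adv}}^m$, equivalently on the number of surviving adversarial subsets. After one round, $B$ is unchanged and $A$ strictly decreases, so $B>A$ still holds for the reduced pool; recomputing $\tilde{\mathbf{p}}(t)$ and $\Lambda_d$ and repeating, the process is monotone and stops only when no adversarial subset remains, at which point all surviving estimates coincide at $\mathbf{p}_{\text{usr}}(t)$ and the recovered position $\hat{\mathbf{p}}(t)$ computed from $\bigcup_m\mathcal{L}^m(t)$ equals $\mathbf{p}_{\text{usr}}(t)$.

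The main obstacle I expect is not the pooled-averaging inequality — which, as sketched, follows from $B>A$ alone — but the combinatorial accounting of $A$ across heterogeneous infrastructures: one must argue that a subset mixing benign and adversarial anchors from infrastructure $m$ either fails to yield a consistent solution (and is discarded, hence not counted in $A$) or contributes exactly one estimate $\mathbf{q}_s\neq\mathbf{p}_{\text{usr}}(t)$, and that Corollary~\ref{cor:advmaxben}, stated for pseudoranges, transfers to the \ac{rssi}- and \ac{rtt}-based multilateration of Section~\ref{subsec:posmet} with the infrastructure-specific $N_{\text{min}}^m$. Once that per-infrastructure bookkeeping is in place, the additive form of the hypothesis is precisely what lets the single global threshold $\Lambda_d$ perform the separation, and the remainder is the induction above.
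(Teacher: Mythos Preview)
Your proposal is correct and follows essentially the same approach as the paper: pool the benign and adversarial subset counts across all $M$ infrastructures, invoke the Lemma~\ref{lem:corattrec} argument to show one exclusion round removes at least one adversarial subset while preserving all benign ones, then close by induction. In fact your sketch is considerably more rigorous than the paper's own two-line proof, which simply cites Lemma~\ref{lem:corattrec} and asserts the inductive step without the explicit triangle-inequality bookkeeping you provide.
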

\begin{proof}
    Under the condition $\sum_{m=1}^M\sum_{i=N_{\text{min}}^m}^{N_{\text{anc}}^m-N_{\text{adv}}^m}C(N_{\text{anc}}^m-N_{\text{adv}}^m,i) > \sum_{m=1}^M\sum_{i=1}^{N_{\text{adv}}^m}C(N_{\text{adv}}^m,i)\sum_{j=N_{\text{min}}^m-i}^{N_{\text{min}}^m-1}C(N_{\text{anc}}^m-N_{\text{adv}}^m,j)$, and according to Lemma~\ref{lem:corattrec}, we have $\tilde{\mathbf{p}}(t)$ in \eqref{eq:theprepos} will be closer to the benign estimations. Thus, $\Lambda_d$ can exclude at least one adversary. Inductively, all other adversaries can be excluded. 
\end{proof}
Intuitively, the conditions for multiple infrastructures are more likely to be satisfied compared to a single infrastructure, which makes multi-infrastructure detection and mitigation more robust. As long as at least $N_{\text{min}}^m$ ranging information in one infrastructure is benign, position manipulation scams can be detected. 

\section{Implementation and Evaluation}
\label{sec:experi}
We conduct a comprehensive evaluation of our proposed detection framework against \ac{gnss} spoofing and coordinated attacks in real-world scenarios: measurements collected by the NSS field campaign at Jammertest 2024 \cite{Jam:J24} (we call the part used here the Jammertest 2024 dataset) and measurements collected with an over-the-air testbed for isolated wireless attacks on mobile devices. Given that \ac{gnss} jamming is relatively easy to detect (based on signal power and sky visibility), evaluation of Wi-Fi spoofing mixed with \ac{gnss} jamming is not included. Performance evaluation includes the accuracy of detecting attacks, the accuracy of excluding the ranging information under attack, false alarms, and position recovery. We also show the relations between the performance and the parameters in our scheme so that one can find parameters that meet the requirements. 

\begin{figure}
\centering
\includegraphics[width=\columnwidth]{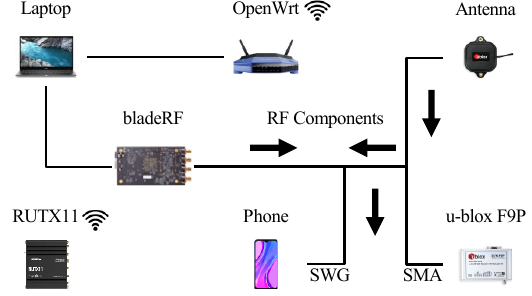}
\caption{High-level architecture of the secure and isolated over-the-air testbed.}
\label{fig:testbed}
\end{figure}

\begin{figure}
\centering
\includegraphics[width=\columnwidth]{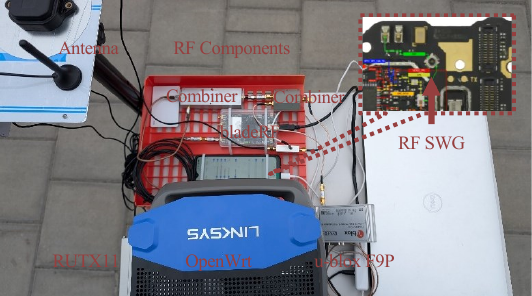}
\caption{Implementation of the secure and isolated over-the-air testbed.}
\label{fig:hardware}
\end{figure}

\subsection{Secure and Isolated Over-the-Air Testbed}
Coordinated attacks on \ac{lbs}, including \ac{gnss} and Wi-Fi spoofing, require a realistic, portable, and over-the-air testbed. Most importantly, it must not interfere with nearby devices or violate spectrum regulations, because conducting such experiments in uncontrolled outdoor settings is not legal. Existing approaches and our previous experiments \cite{LiuPap:C24,LiuPap:C25b} are based on simulation, shielded boxes, or controlled indoor spaces, which limit the realism and test scenarios. Hence, we design and implement a secure and isolated over-the-air testbed with ground truth reference that ensures an ethical, realistic, and outdoor evaluation of \ac{lbs} attacks on real mobile devices, as shown in Figure~\ref{fig:testbed} and Figure~\ref{fig:hardware}. 

The testbed is composed of a testing phone, a bladeRF 2.0 micro xA9, a Linksys WRT1200AC Wi-Fi router installed with OpenWrt, a u-blox EVK-F9P receiver, a u-blox ANN-MB1-00 antenna, a Teltonika RUTX11 4G router, an XPS laptop installed with Windows 11, and RF cables, attenuators, and connectors (LMR195, LMR400, RG316). The testing phone has an SWG-type RF switch connector, and we modified it to connect a cable with an SMA connector. First, the laptop generates \ac{gps} spoofing signals using the bladeGPS tool for real-time signal generation \cite{BladeGPS2022}. Simultaneously, the Wi-Fi router is coordinated to generate beacons to spoof network positioning \cite{LiuPap:C25b}. The ANN-MB1-00 antenna is powered by 5-volt direct current and transmits benign live sky L1+L5 \ac{gnss} signals. Second, benign and attack \ac{gnss} signals are combined with 30 dB attenuation and fed into the testing phone\footnote{Given Wi-Fi operates at unlicensed spectrum, we broadcast them instead of using wired connections.}. Meanwhile, benign \ac{gnss} signals without attenuation are fed into the u-blox receiver as a reference. The receiver is configured to perform precise positioning with the help of our \ac{rtk} station. In addition, both the testing phone and laptop have internet through the 4G router. 

\begin{figure}
    \centering
    \includegraphics[width=\columnwidth]{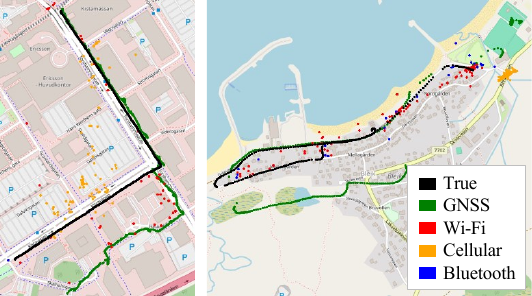}
    \caption{Test sites with two traces: Kista (left) for coordinated spoofing and Bleik (right) for \ac{gnss} spoofing.}
    \label{fig:traces}
\end{figure}

\subsection{Experimental Setup and Data Collection}
We study five walking paths in a 180-meter by 450-meter area in Kista Science City and 68 driving routes in Bleik, a 0.38-square-kilometer village in Norway, as illustrated in Figure~\ref{fig:traces}. Each walking and driving trajectory approximately spans 10--20 minutes. There are many trees and buildings roadside, with heights ranging from 5 to 50 meters, and the buildings in Kista are higher and denser than those in Bleik. For Kista data collection, the raw measurements from \ac{gnss} receiver chip, wireless network signals, and onboard sensors were collected on three Android smartphones: a Redmi 9 testing phone, a Redmi 9 reference phone, and another Google Pixel 8 reference phone. Similarly, in Jammertest 2024, we use six testing phones, i.e., two Pixel 4 XL, two Pixel 8, one Redmi 9, and one Samsung Galaxy S9. 

The u-center software with u-blox ZED-F9P receiver records \ac{gnss} signals for ground truth positions, because it can use benign constellations with a nearby \ac{rtk} reference station. GNSSLogger application records the \ac{gnss} trajectory of Android phone, consisting of \ac{nmea} and \ac{rinex} files, summarized in Table~\ref{tab:rawgnss}. NetworkSurvey application records information on beacons and other messages from network infrastructures near the trajectory, in GeoPackage format, summarized in Table~\ref{tab:rawnet}. In addition, acceleration in meters per second squared, angular velocity in degrees per second, and magnetic field in millitesla are collected from onboard sensors. Note that satellite locations are not derived from \ac{gnss} signals but from precise satellite ephemeris, and the locations of network infrastructures are sourced from WiGLE.net \cite{BobArkUht:J23}. 

\begin{table}
\centering
\caption{Format of the derived major features from \ac{gnss} raw measurements.}
\begin{tabular}{l|l}
\hline
\hline
Time Information & Unix Time [ms] \\
\hline
Distance/Speed & Pseudorange [m] \\
& Pseudorance Uncertainty [m] \\
\hline
Satellite Status & Satellite Identifier \\
& Signal Type \\
& Satellite Location [m,m,m] \\
\hline
\hline
\end{tabular}
\label{tab:rawgnss}
\end{table}

\begin{table}
\centering
\caption{Format of the major features from network raw measurements.}
\begin{tabular}{l|l}
\hline
\hline
Time Information & Unix Time [s] \\
\hline
Ranging Information & Signal Strength [dB] \\
& Signal Frequency [Hz] \\
\hline
Cellular Status & Mobile Country Code \\
& Mobile Network Code \\
& Cell Identifier \\
& Cellular Base Station Location [m,m,m] \\
\hline
Wi-Fi Status & Medium Access Control \\
& Service Set Identifier \\
& Wi-Fi Access Point Location [m,m,m] \\
\hline
Bluetooth Status & Medium Access Control \\
& Service Set Identifier \\
& Bluetooth Device Location [m,m,m] \\
\hline
\hline
\end{tabular}
\label{tab:rawnet}
\end{table}

In Jammertest 2024 \cite{Jam:J24}, the spoofing strategies included stationary spoofing of small/large position jumps, simulated driving, flying spoofing, and more, employing Skydel with two USRP X300 \acpl{sdr} to generate the \ac{gnss} signals following pre-planned routes \cite{Jam:J24}, and \ac{gnss} spoofing is after \ac{gnss} jamming. Although the strategy using a \ac{gnss} jammer and \ac{gnss} spoofer demonstrated a high success rate in the experiments, we found it challenging to seamlessly spoof the smartphone without prior jamming, but rather gradually increasing the power of the spoofer signal and drifting the receiver signal tracking. In the Kista over-the-air testbed experiments, we tested three spoofing strategies: (i) spoofing with enduring \ac{gnss} jamming throughout the entire spoofing period, (ii) short-term temporary jamming until lock on spoofing signals, and (iii) gradual deviating that always controls \ac{gnss} with jamming off. The gradual deviation enduringly spoofs \ac{gnss} position after \ac{gnss} receiver cold-start, but initially makes the spoofing position very close to the actual position, then deviates it from 0 meters to about 150 meters. The whole \ac{gnss} spoofing process takes about 10 minutes. In terms of Wi-Fi spoofing, the rogue \acpl{ap} are placed near the victim to send fake beacons that coordinately pretend to be \acpl{ap} from the spoofing trace, thereby generating 5--15 fake Wi-Fi beacons every second. 

\subsection{Baseline Methods}
\subsubsection{GNSS Spoofing Detection}
We consider the most common detectors and state-of-the-art location providers: Google Play location, network-based detection, Kalman filtering, and location fusion \cite{LiuPap:J25a}. The Google Play location is obtained through the Android API. Network-based positioning performs localization based on network infrastructures and crowd-sourced data, while Kalman filter-based detection estimates a position based on \ac{gnss} position and onboard sensors. The detector then calculates the Euclidean distance between the provided position and the raw \ac{gnss} position. When this distance is larger than a threshold, an attack alarm will be raised. Location fusion \cite{LiuPap:J25a} goes further to securely fuse all opportunistic position information with motion data. It smooths the position estimations from \ac{gnss} and networks with the help of speed, acceleration, etc. Then, a statistical model constructs the uncertainty of these positions. Based on the combined positions with uncertainty ranges, the method decides whether the \ac{gnss} position is within a trusted range. 
\subsubsection{Rogue Wi-Fi AP Detection}
Given that the placement of rogue \acpl{ap}, which broadcasts Wi-Fi signals to attack mobile platforms, inevitably affecting Wi-Fi positioning, we include rogue \acpl{ap} detection methods for comparison. One method is \ac{rssi} clustering implemented and adopted from PRAPD \cite{WuGuDonShi:J18}. It assumes that the Wi-Fi client continuously measures \acpl{rssi} at different times and then uses the $k$-medoid algorithm to cluster these \acpl{rssi} vectors. Ideally, if the environment is stable enough and the \acpl{rssi} vectors are from similar times, the centers of the clusters should be close to each other. Vectors containing \acpl{rssi} from rogue \acpl{ap} are most likely spread out from the cluster center. Another detection method relies on an anomaly detector named ECOD \cite{LiZhaHuBot:J22}. ECOD assumes anomalies are rare and located in the tails of the event's distribution, so anomalies can be identified based on the joint distribution of different dimensions of events. We define the features of the event as the \ac{rssi} differences with respect to \acpl{ap} and times. The detector is hyperparameter-free and calculates the empirical cumulative distribution of these features. When the tail probabilities construct an outlier score much higher than the threshold from the training data, an anomaly is detected. 

\begin{figure}
    \centering
    \includegraphics[width=.86\columnwidth]{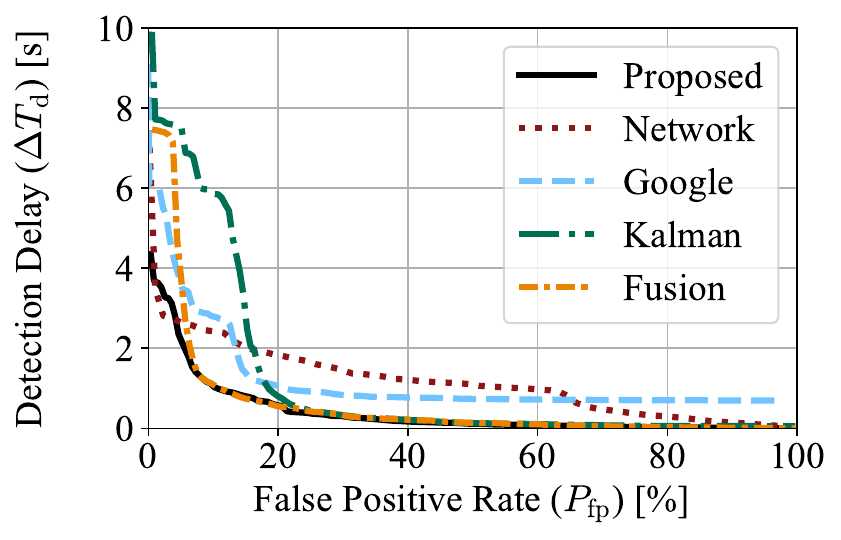}
    \includegraphics[width=.86\columnwidth]{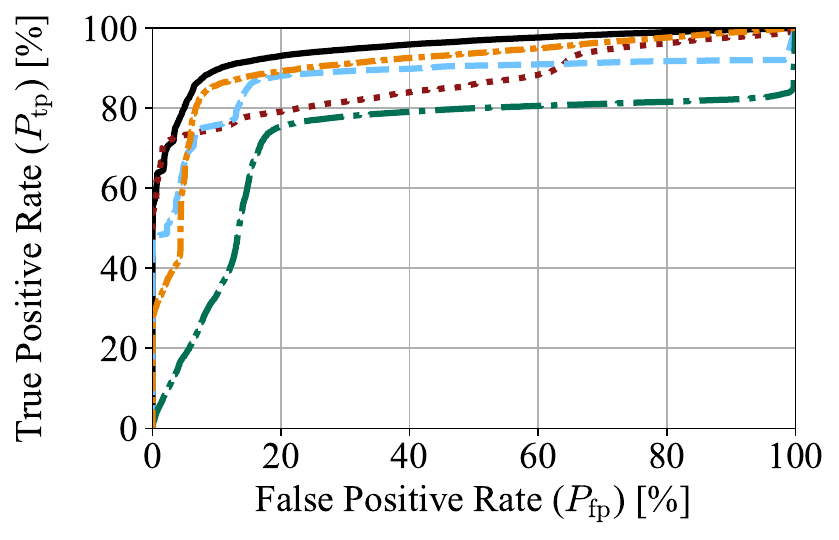}
    \caption{$P_\text{tp}$ and $\Delta T_\text{d}$ of the proposed and baseline methods based on our Jammertest 2024 dataset (for \ac{gnss} spoofing). The same legend applies to both the top and bottom plots.}
    \label{fig:ptptima}
\end{figure}

\subsection{Evaluation Based on Jammertest}
Our evaluation considers the true positive rate ($P_\text{tp}$) versus the false positive rate ($P_\text{fp}$) metric, i.e., \ac{roc} curve, over the baseline methods mentioned above. $P_\text{tp}$ is the percentage of time intervals that attacks are successfully detected over all time intervals under attack, while $P_\text{fp}$ is the percentage of time intervals that are incorrectly classified as under attack over all time intervals not under attack. We choose the experiment parameters $w=15$, $K_\text{loc}(t-t')=\exp \left(-0.3(t-t')^2\right)$, the order of polynomial regression is 2, and a 100\% sampling rate (i.e., all subsets). $\Delta T_\text{d}$ represents the attack detection delay, i.e., the time elapsed between the start of the attack and the moment it is detected.

Figure~\ref{fig:ptptima} presents the results of $P_\text{tp}$ and $\Delta T_\text{d}$ versus $P_\text{fp}$ for Jammertest dataset. Our proposed method improves $P_\text{tp}$ by 9\% to 18\% compared to Google Play location and network-based detection, when $P_\text{fp}$ is between 5\% and 10\%. This confirms that as long as \ac{gnss} \ac{dop} is favorable (\ac{gnss} positions are more precise than those based on network positioning), Google Play location prioritizes \ac{gnss} even when spoofed. Compared with the Kalman filter, the proposed one mostly doubles the performance gain of $P_\text{tp}$. Moreover, our proposed method achieves a $\Delta T_\text{d}$ gain of up to 5 seconds when $P_\text{fp}<10\%$. In particular, $P_\text{tp}$ is evaluated at the level of individual position fixes rather than on entire traces. $P_\text{tp}$ represents the proportion of correctly identified spoofed intervals among all spoofed intervals, while $\Delta T_\text{d}$ measures the detection latency per trace. 

\begin{figure}
    \centering
    \includegraphics[width=.86\columnwidth]{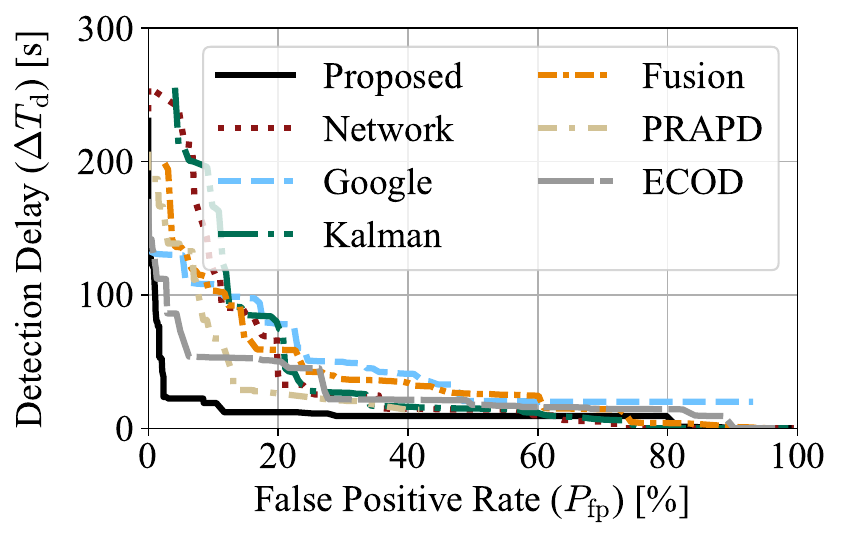}
    \includegraphics[width=.86\columnwidth]{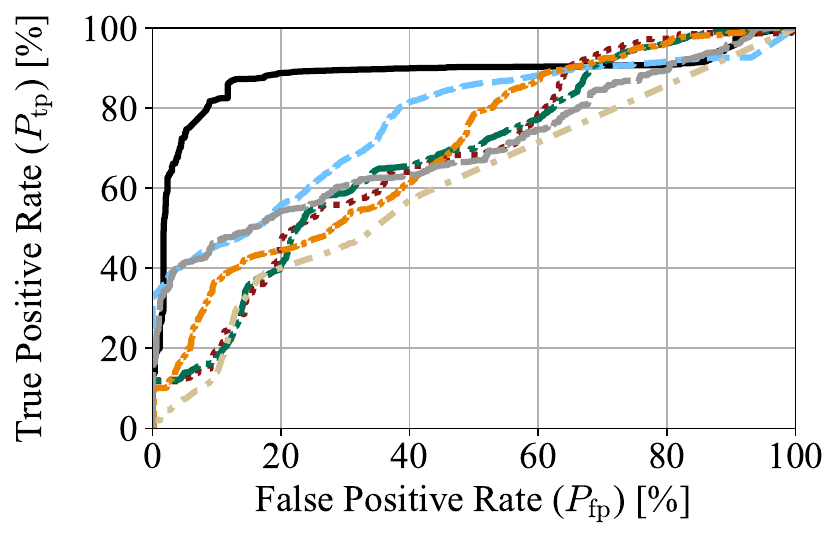}
    \caption{$P_\text{tp}$ and $\Delta T_\text{d}$ of the proposed and baseline methods on the secure and isolated over-the-air testbed (for coordinated attacks). Both plots share the same legend.}
    \label{fig:ptptimb}
\end{figure}

\subsection{Evaluation Based on Testbed}
Our evaluation on the testbed considers the same key metrics as the aforementioned baseline methods: $P_\text{tp}$ and $\Delta T_\text{d}$. 

Recall that we make the spoofed positioning results of \ac{gnss} and Wi-Fi consistent with each other. Then, this coordinated attack is more challenging than \ac{gnss} spoofing, and the results are shown in Figure~\ref{fig:ptptimb}. Our proposed method has a 41\% to 48\% gain of $P_\text{tp}$ compared to the detection based on secure location fusion, when $P_\text{fp}$ is in the range 5\% to 20\%. This is because secure fusion trusts network-based positioning, while Wi-Fi is under spoofing and provides consistent attack results according to \ac{gnss} spoofing. We also observe that Google location provider trusts the network-based positioning less and has better detection in this case, and detection purely based on network-based positioning is not comparable with theirs. As for the Kalman filter, it can hardly detect this kind of spoofing since the deviation is gradually growing. From the perspective of rogue \ac{ap} detection, the proposed method has at least a 50\% gain of $P_\text{tp}$ compared to \ac{rssi} clustering-based detection and more than 32\% improvement over ECOD, with $P_\text{fp}$ in the range 5\% to 20\%. 

The proposed method has a decent performance gain, but the accuracy is slightly lower than in Figure~\ref{fig:ptptima}. This emphasis on coordinated location spoofing is more challenging to detect. 

\begin{table*}
\caption{Absolute error of the positioning when attacked over different methods and datasets.}
\centering
\renewcommand{\arraystretch}{1.3}
\begin{tabular}{l*{8}{c}}
\toprule
\multirow{2}{*}{Methods} & \multicolumn{4}{c}{Jammertest Dataset} & \multicolumn{4}{c}{Testbed Dataset} \\
\cmidrule(r){2-5} \cmidrule(l){6-9}
& Mean & Median & Best 20\% & Worst 20\% & Mean & Median & Best 20\% & Worst 20\% \\
\midrule
Proposed & 253.64 & 75.27 & 10.58 & 358.13 & 108.07 & 33.11 & 7.26 & 201.27 \\
Network & 507.29 & 357.13 & 135.12 & 633.79 & 159.98 & 153.54 & 71.92 & 226.69 \\
Google & 303.38 & 34.16 & 8.97 & 407.55 & 143.34 & 96.37 & 33.13 & 206.58 \\
Kalman & 495.74 & 137.88 & 12.28 & 803.29 & 199.50 & 141.94 & 47.39 & 327.99 \\
Fusion & 379.93 & 146.18 & 16.35 & 543.24 & 124.61 & 37.57 & 16.18 & 200.70 \\
\bottomrule
\end{tabular}
\label{tab:recpos}
\end{table*}

\begin{figure}
    \centering
    \includegraphics[trim={0 0 3cm 0},clip,width=.49\columnwidth]{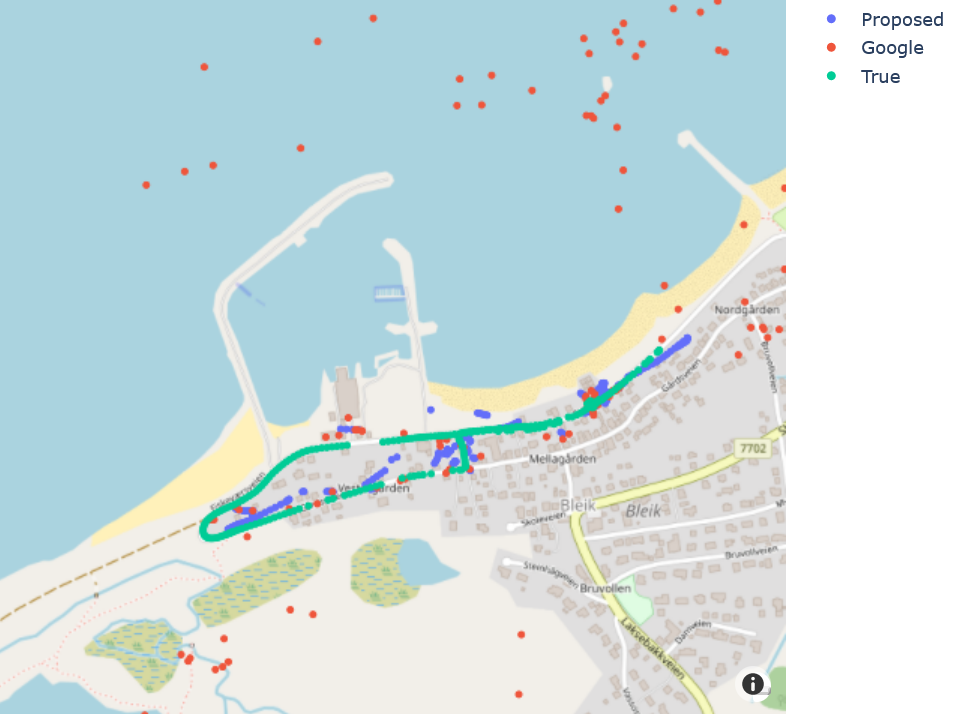}
    \includegraphics[trim={0 0 3cm 0},clip,width=.49\columnwidth]{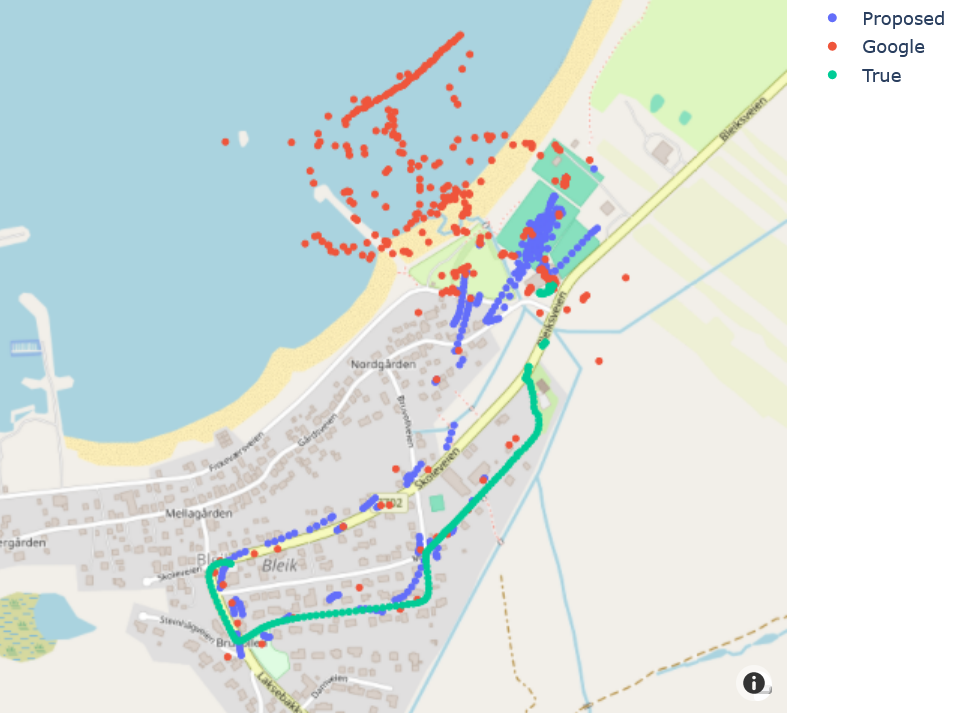}
    \caption{Two trajectories during attack: Cyan dots represent ground truth positions, red dots correspond to Google Play fused positions, and blue dots indicate the recovered positions.}
    \label{fig:apprecpos}
\end{figure}

\subsection{Position Recovery When Attacked}
After the attack is detected and the signals under manipulation are excluded, we can recover a position $\hat{\mathbf{p}}(t)$ that is intended to estimate $\mathbf{p}_{\text{usr}}(t)$ using benign ranging information. We compute the absolute localization error of the recovered position. Table~\ref{tab:recpos} summarizes \ac{mae}, 20th and 80th percentiles of the absolute error distribution: We have 50--254 meters \ac{mae} improvement under \ac{gnss} spoofing of Jammertest dataset, and the proposed position recovery has a median absolute error of 75 meters; when under coordinated location spoofing on the testbed, our method improves \ac{mae} by 16--91 meters, and most absolute error of the proposed position recovery is better than 33 meters. Figure~\ref{fig:apprecpos} presents representative plots illustrating position recovery under location attacks. Considering that our testing datasets are from a consumer-grade \ac{gnss} receiver on a smartphone, which typically has an accuracy of about 10--50 meters in benign environments, the achieved position recovery result is decent. 

\subsection{Ablation Studies}
The proposed detection has different components and parameters; thus, this section analyzes and tunes these parameters based on the Jammertest dataset, given its large scale.
\subsubsection{Effect of Sampling}
Based on the subset sampling strategy proposed in Section~\ref{sec:prosch}, we evaluate $P_\text{tp}$ of detecting \ac{gnss} spoofing and rogue \ac{ap} attacks with different sampling ratios. The comparison of $P_\text{tp}$ with sampling rate is shown in Table~\ref{tab:effsam}. The results indicate that higher sampling rates lead to a monotonic increase in $P_\text{tp}$. When the sampling rate increases from 0.25 to 1.0, $P_\text{tp}$ improves from 77\% to 80\%, at $P_\text{fp}=0.05$. This indicates that higher sampling rates can capture a wider range of subsets, while at lower sampling rates the algorithm may miss important subsets, resulting in reduced accuracy. However, the accuracy is not very sensitive to the sampling rate when $P_\text{fp}\ge0.1$. When the sampling rate decreases from 1 to 0.25 (4 times less computation), $P_\text{tp}$ only decreases by up to 3\%. Although higher sampling rates generally lead to better performance, there may be practical considerations, e.g., computational resources and time constraints.

\begin{table}
\centering
\caption{$P_\mathrm{tp}$ under different sampling ratios and $P_\mathrm{fp}$.}
\begin{tabular}{c*{5}{c}}\toprule
\diagbox{\textbf{S.}}{$P_\text{fp}$} 
&0.05&0.10&0.15&0.20&0.25\\\midrule
0.25&77.51\% & 87.18\% & 90.87\% & 92.40\% & 93.53\%\\
0.50&78.45\% & 89.15\% & 91.30\% & 92.68\% & 93.59\%\\
0.75&79.51\% & 89.07\% & 91.15\% & 92.55\% & 93.49\%\\
1.00&80.49\% & 89.61\% & 91.66\% & 93.05\% & 93.97\%\\\bottomrule
\end{tabular}
\label{tab:effsam}
\end{table}

\begin{table}
\centering
\caption{$P_\mathrm{tp}$ under different window sizes and $P_\mathrm{fp}$.}
\begin{tabular}{c*{5}{c}}\toprule
\diagbox{$w$}{$P_\text{fp}$}
&0.05&0.10&0.15&0.20&0.25\\\midrule
5&79.74\% & 87.58\% & 90.10\% & 91.99\% & 93.37\%\\
10&79.73\% & 89.10\% & 91.21\% & 92.70\% & 93.85\%\\
15&80.32\% & 89.49\% & 91.54\% & 92.95\% & 94.01\%\\
20&82.09\% & 90.23\% & 92.39\% & 93.45\% & 94.29\%\\
25&83.04\% & 90.19\% & 92.39\% & 93.57\% & 94.52\%\\
30&82.91\% & 90.29\% & 92.65\% & 93.65\% & 94.46\%\\\bottomrule
\end{tabular}
\label{tab:effwin}
\end{table}

\subsubsection{Effect of Window Size}
Through adjusting $w$ in \eqref{eq:proall}, we provide insights into the detection performance under different window size choices. $w$ ranges from 5 to 30 samples, and other experiment settings are from the coordinated spoofing evaluation. The performance metric $P_\text{tp}$ is shown in Table~\ref{tab:effwin}. Determining the appropriate size requires a trade-off, as a smaller $w$ requires less computing power but may affect the detection accuracy. In contrast, a large $w$ could result in the processing of unnecessary historical data and performance degradation. Our results show an increasing trend of $P_\text{tp}$ for $5<w\leq25$ and become stable when $20<w\leq30$, but larger windows incur significantly higher computational overhead (e.g., processing time for $w=25$ is 1.7 times that of $w=15$). In addition, $P_\text{tp}$ for $w=15$ are slightly better than others and, consequently, it is the most common choice in our detector. We also do a small-scale comparison among different orders of local polynomial regression in $\mathbf{W}$, and $P_\text{tp}$ under the order of polynomial $n=2$ is better than $n=1$. 

\subsection{Computation Overhead}
We evaluated the computational overhead of our proposed scheme on different mobile platforms, e.g., smartphones. These devices typically support logging \ac{gnss} pseudoranges at 1 Hz and network survey data every 3--10 seconds, depending on connectivity. We implemented the algorithm in Python and tested it on Android MTK and Google Tensor devices, observing that each detection cycle can be completed within 1 second, even without specific parallel optimizations, demonstrating feasibility for real-time deployment. Moreover, our design supports the adaptive selection of information sources based on application and computational resources. For security-critical applications, all sensing modules are enabled, including \ac{gnss} and full network data. For less critical applications, only minimal non-sensitive information, such as GeoIP and connected Wi-Fi, is collected. 

The computational complexity of our scheme can be analyzed by breaking down the contributions of its main stages. First, subset generation without sampling for a given infrastructure $m$ is $\mathcal{O}(2^{J^m(t)})$, which grows exponentially with the number of anchors. To mitigate the exponential growth, we employ a sampling strategy that can control the upper-bound number of subsets, $N_\text{sam}$. For each sampled subset, a positioning algorithm is executed, typically with $\mathcal{O}(J^m(t))$ complexity. Second, in position fusion, local polynomial regression for one position is $\mathcal{O}(w)$. Recall that $w$ is the rolling window size. Then, it is $\mathcal{O}(N_\text{sam} \times w)$ for the sampled subsets. As a result, the computational complexity is $\mathcal{O}(N_\text{sam} \sum_{m=1}^M J^m(t)+M \times N_\text{sam} \times w)$ in total, where $M,w$ should be small. Hence, even if the mobile platform computation is not sufficient, the scheme can dynamically adjust the sampling to ensure real-time. 

In our evaluation, the number of \acpl{ap}/\acpl{bs}, $J^m(t)$, varies for the Jammertest 2024 dataset, with few \acpl{ap} and \acpl{bs} available there, and the over-the-air testbed dataset, typically ranging from 10 to 15 for the latter. In such settings, the processing overhead is not high, but if the anchor number were higher, e.g., an order of magnitude higher as it could be in some dense urban settings, processing ranging information from all anchors could be hard, depending on the platform. Nonetheless, the more ranging information from anchors, the more difficult it is for an attacker to spoof the mobile platform position. In addition, one can also preliminarily filter out certain \acpl{ap}/\acpl{bs}, e.g., based on the provider or other trustworthiness, or use sampling strategies, alone or in conjunction with the aforementioned preselection, without validating all the ranging information. 

\section{Related Work}
\label{sec:relwor}
\subsection{RAIM Protecting GNSS}
\Ac{raim} leverages redundant data to cross-validate consistency based on statistics of residuals or subsets of visible satellites \cite{bro:j92}. There are two primary forms of \ac{raim}: residual-based and solution-separation \cite{JoeChaPer:J14}. Residual-based \ac{raim} uses statistical hypothesis checking on residuals to identify potentially inaccurate measurements \cite{KhaRosLanCha:C14,RoyFar:C17}. The residual values can come from the least squares or Kalman filters: \ac{ekf} \ac{raim} makes use of sliding window filters to identify and eliminate outliers using \ac{gps} and inertial sensors \cite{KhaRosLanCha:C14,RoyFar:C17}. Solution-separation \ac{raim} \cite{LiuPap:C24,ZhaPap:J21} recursively assumes faulty satellites, generates subsets of the remaining satellites to derive solutions, and then identifies which subsets contain faults. For example, \cite{ZhaPap:C19} integrates RANSAC clustering to classify position solutions. Further, advanced \ac{raim} \cite{BlaWalEngLee:J15} extends fault exclusion to multiple constellations inclusive of \ac{gps} and Galileo, presenting better integrity compared with single constellation. More recent \ac{sop} techniques \cite{MaaKas:J21,LiuPap:J25a} integrate wireless network infrastructures with \ac{raim} principles. They combine kinematic models, cellular pseudoranges, or Wi-Fi measurements to enhance \ac{raim} overall performance.

\subsection{Rogue Wi-Fi AP Detection}
Detection of malicious Wi-Fi hotspots has received increasing interest. Industry standard solutions \cite{Aru:J24,Ibm:J24,Gan:J24} often employ whitelist-based filtering using known \ac{ap} \ac{mac} addresses and \acpl{ssid} to identify unauthorized \acpl{ap}. However, such records can be easily spoofed using consumer-grade hardware. For example, most of the Wi-Fi routers can set any \ac{ssid}, and some others or open-source routers (e.g., OpenWrt) can even modify their \ac{mac} addresses. Hence, there is a need for detection beyond these Wi-Fi beacons. \cite{AhmAmiKanAbd:C14} uses \ac{rssi} measurements to perceive malicious Wi-Fi signals, but its clustering and two-step algorithm is affected by signal fluctuations as a result of environmental factors (interference and multipath). Another line of defense involves wireless fingerprinting, which operates independently of client devices \cite{LinGaoLiDon:C20}. However, the scalability and robustness of fingerprinting get worse in dynamic network environments (rain, high traffic, etc.). In addition, semantic-based \ac{csi} in \ac{iot} environments offers potential accuracy advantages but requires specialized hardware and large-scale frequency band scanning \cite{BagRoeMarSch:C15}. PRAPD \cite{WuGuDonShi:J18} proposed an \ac{rssi}-based method for moving receivers to get a more general and realistic detection of rogue \ac{ap}. \cite{YanYanYanSon:J22} studied real-time identification of susceptible Wi-Fi connections in operational networks. 

\section{Conclusion}
\label{sec:conclu}
This paper proposed a framework to detect and mitigate coordinated location attacks that manipulate \ac{lbs} positioning data. The detector used opportunistic ranging information and onboard sensors' data. Then, an extended \ac{raim} was proposed to cross-validate the derived position estimates, assess the likelihood of an attack, and recover the actual position when possible. We also demonstrated representative position manipulation attacks in real-world \ac{lbs} applications, designed a testbed to evaluate attacks targeting multiple wireless signals, and showed the feasibility of the proposed scheme with theoretical analysis and experiments in various scenarios. The clearest benefit lies in significantly reducing user and service exposure to scams without requiring additional hardware. Still, this scheme only works for environments with sufficient opportunistic ranging information. 


\bibliographystyle{IEEEtran}
\bibliography{reference/references}

\end{document}